\documentclass[draftcls,12pt,onecolumn]{IEEEtran}
\usepackage[utf8]{inputenc}
\usepackage[english]{babel}
\usepackage{amsthm}
\usepackage{amsfonts}%
\usepackage{amsmath}%
\usepackage{fixmath}%
\ifCLASSINFOpdf
\usepackage{caption}
\usepackage{subfloat}
\usepackage[pdftex]{graphicx}
\graphicspath{ {./plots/} }
\else
\fi
\usepackage{amsmath,amsfonts,amssymb,array}

\theoremstyle{definition}

\theoremstyle{remark}

\usepackage{algorithmic}
\usepackage{array}
\usepackage[table,xcdraw]{xcolor}
\usepackage{cite}
\hyphenation{op-tical net-works semi-conduc-tor}
\newcommand{\E}{\mathbb{E}}
\usepackage{siunitx}
\allowdisplaybreaks

\usepackage{amsthm}
\usepackage{amsfonts}%
\usepackage{amsmath}%
\usepackage{fixmath}%
\setcounter{MaxMatrixCols}{30}%
\usepackage{amssymb}%
\usepackage{graphicx}
\usepackage{comment}
\usepackage{url}
\usepackage{cite}
\usepackage{xcolor}
\usepackage{amsthm}
\usepackage{subfloat}
\usepackage{subfig}
\usepackage{flushend}

\newtheorem{theorem}{Theorem}


\begin{document}
		\title{How much Training is Needed in Downlink Cell-Free mMIMO under LoS/NLoS channels?}
			\author{\IEEEauthorblockN{Sai Manikanta Rishi Rani, Ribhu Chopra, Kumar Appaiah} 
		
		\thanks{S.M.R. Rani was with the Department of Electronics and Electrical Engineering, Indian Institute of Technology Guwahati, Assam, India for the duration of this work, he is now with the Department of Electrical and Computer Engineering, University of California San Diego, San Diego, CA 92093 USA. R. Chopra is with the Department of Electronics and Electrical Engineering, Indian Institute of Technology Guwahati, Assam, India. K. Appaiah is with the department of Electrical Engineering, Indian Institute of Technology Bombay, Mumbai, Maharashtra, India. (emails: sairishi10@gmail.com, ribhu@outlook.com,akumar@ee.iitb.ac.in).
		
		}
}

\maketitle		
\begin{abstract}
The assumption that no LoS channels exist between wireless access points~(APs) and user equipments~(UEs) becomes questionable in the context of the recent developments in the direction of cell free massive multiple input multiple output MIMO~(CF-mMIMO) systems. In CF-mMIMO systems, the access point density is assumed to be comparable to, or much larger than the the user density, thereby leading to the possibility of existence of LoS links between the UEs and the APs, depending on the local propagation conditions. In this paper, we compare the rates achievable by CF-mMIMO systems under probabilistic LoS/ NLos channels, with and without acquiring the channel state information~(CSI) of the fast fading components. We show that, under sufficiently large AP densities, statistical beamforming that does not require the knowledge about the fast fading components of the channels, performs almost at par with full beamforming, utilizing the information about the fast fading channel coefficients, thus potentially avoiding the need for training during every frame. We validate our results via detailed Monte Carlo simulations, and also elaborate the conditions under which statistical beamforming can be successfully employed in massive MIMO systems with LoS/ NLoS channels.

\end{abstract}
	
	\begin{IEEEkeywords}
		Cell Free massive MIMO, LoS/ NLoS channels, Performance Analysis, Statistical Beamforming.
	\end{IEEEkeywords}
\section{Introduction}	
\subsection{Motivation}
The idea of using a large number of co-located antennas to simultaneously serve a smaller number of user equipments~(UEs) over the same time frequency resource, dubbed massive multiple input multiple output~(mMIMO), has been of much research interest over the past decade~\cite{Marzetta_TWC_2010,Chokes_BSR_LMIMO,Red_book}. Greater spectral and energy efficiencies coupled with analytical tractability and simple linear processing are some of the key benefits of this architecture~\cite{scaling,CHEMP}. Consequently, mMIMO has been widely accepted as a key enabling technology for 5G cellular systems~\cite{reality}, with a large body of literature discussing the  requirement of the availability of accurate channel state information~(CSI) at the BS. However, concerns have been raised on the non-uniform coverage offered by massive MIMO systems, that is inherently unfair to UEs located at the cell edges. The focus has therefore shifted to providing a uniform quality of service to all users in an mMIMO like setup. As a consequence, several distributed realizations of the mMIMO setup have recently been explored, with cell free~(CF)-mMIMO, emerging as a front-runner~\cite{Ngo_SPAWC_2015,Ngo_TWC_2017}. 

The canonical CF-mMIMO setup employs a large number of access
points~(APs), each equipped with a small antenna array or mostly a
single antenna, distributed over a large area and simultaneously
serving a large number of users over the same time frequency
resource. Here, the AP antenna density is assumed to be much larger
than the UE density, with all the APs connected to a central
processing unit~(CPU) via a high rate, loss free backbone link. It has
been shown that CF-mMIMO systems, in addition to offering a more
uniform coverage to users, also inherit several advantages of
conventional cellular mMIMO systems, such as power scaling and
simple linear processing~\cite{Ngo_SPAWC_2015}. Another interesting
aspect of CF-mMIMO systems is the distributed architecture that
results in close proximity between the users and the serving APs. This
also increases the probability of having a line of sight~(LoS) link
between a user and one or more of the associated APs. Most of the
existing literature on CF-mMIMO systems assumes either Rayleigh fading
channels with an LoS link existing with probability
0~\cite{Ngo_TWC_2017,Ngo_TGCN_2018,Hardening_CF,blessing,Shamai_VTC_2001,Nayebi_TWC_2017,making_CF,Bashar_ICC_2018,Dhillon_CFMM,Zhang_Access_2018,Papa_TVT_2020},
or Rician fading channels with an LoS link existing with probability
1~\cite{Ozdogan_TWC_2019,Zhang_Comml_2002,Jin_sys_2020}. However, due
to the uncertainties in the physical distribution of various blockages
(e.g. buildings in urban areas, trees, hills, etc. in rural areas), it
is impossible to accurately determine the presence of an LoS link
between the APs and the UEs apriory. Therefore, in this paper, we discuss the
performance of CF-mMIMO systems under probabilistic LoS
components. 

\subsection{Prior Work}
The idea of CF-mMIMO was first discussed
in~\cite{Ngo_TWC_2017,Ngo_TGCN_2018}, and it was shown that, even with
local conjugate beamforming, it can outperform both colocated mMIMO as
well as the small cell networks model, in terms of throughput and
energy efficiency. The idea of using only locally available channel
state information~(CSI) to serve a large number of users in CF-mMIMO
is based on the earlier idea of cooperative
MIMO~\cite{Shamai_VTC_2001}, wherein the basic idea is to group
multiple devices via point-to-point links into a virtual antenna array
to emulate MIMO communications.  In~\cite{Ngo_TWC_2017}, the
impracticality of forwarding the CSI from all APs to the CPU was
considered, and CPU data detection based on preprocessed samples from
the APs was suggested. Alternatively, the authors
in~\cite{Nayebi_TWC_2017} analyzed the performance of CF-mMIMO systems
with centralized MMSE combining under the assumption of CSI
availability at the CPU. The issue of impracticality of an infinite
rate backhaul was raised in~\cite{Bashar_ICC_2018}
and~\cite{Dhillon_CFMM}, where the effects of quantization in the
backhaul link were derived, and corresponding impairments to the
achievable rate were quantified. The authors in~\cite{making_CF} make
a detailed comparison of the backhaul requirements, communication and
computational complexities, and the achievable performances of
different combining schemes in CF-mMIMO systems. It has been
demonstrated that the advantage of channel hardening enjoyed by cellular mMIMO systems is also
inherited by CF-mMIMO systems~\cite{Ngo_TWC_2017}. However, it was
argued in~\cite{Hardening_CF} that, due to the distributed nature of
the system, the channels between different APs and users are
independent but non-identically distributed. This leads to CF massive
MIMO inheriting this property only when the number of antennas at
individual APs is large. 

In recent years, considerable research efforts have gone into
characterizing the behaviour of MIMO systems under LoS/NLoS
channels~\cite{Ding_TWC_2016,Ding_TWC_2017,Cho_Comml_2018}. We note
that the effect of probabilistic LoS/NLoS channels is more pronounced
in dense cellular networks than sparse networks due to the increased
AP and UE
densities. 
Depending on the degree of accuracy of the modelling for LoS/NLoS
channels, the behaviour of the area spectral efficiency (ASE) of the
system has also been observed to
change~\cite{Ding_TWC_2016,Ding_TWC_2017}. In~\cite{LoS_Uplink} the
uplink performance of CF-mMIMO systems under probabilistic LoS/NLoS
channels was analysed for different combining schemes being used at
the CPU, and it was found that, for sufficiently large AP densities,
ubiquitous LoS link coverage can be assumed for all the UEs. It has
recently been argued that, in case LoS links are present between the
BS and the UEs, the downlink beamforming based purely on the
statistical or slow fading parameters of the channel model, dubbed
statistical beamforming, can be
used~\cite{spatial_div,two_tier,3d_BF}. In~\cite{spatial_div,
  two_tier}, the second order statistics of the channel were exploited
to create a two tier precoding matrix. Since statistical CSI varies
slowly as compared to the real-time instantaneous CSI, it can
therefore be obtained accurately by the BS/CPU via long-term
feedback~\cite{stale_CSI}. Moreover, during the transmission, the UEs
only need to estimate and feedback the dimension-reduced effective
channel. This greatly reduces the channel estimation and feedback
overhead in mMIMO systems. The authors in~\cite{3d_BF} perform LoS
based beamforming on a mMIMO downlink system under a Rician fading
channel. The proposed statistical three-dimensional (3D) beamforming
precoding scheme is based only on the statistical LOS information of
each user. The scheme achieves considerably high sum rates while
requiring much less CSI overhead at the
BS. This results in an interesting proposition for CF-mMIMO systems,
where the cost of CSI acquisition at the CPU is high, and the
probability of having an LoS channel between a user and one or more
APs has been shown to be large~\cite{LoS_Uplink}.


\subsection{Contributions}
 
In this paper, we examine the need for the availability of fast fading
CSI in a CF-mMIMO system under a probabilistic LoS/NLoS channel. We
use the instantaneous achievable rate per channel use as a metric to
quantify the effect of the availability of the fast fading CSI at the
APs. In this discussion, we restrict ourselves to using conjugate
beamforming for downlink data precoding. This is mainly due to the
simplicity of implementation, and the reduced reliance on the backhaul
network due to its distributed implementation. For simplicity of
exposition, we dub schemes with and without the availability of fast
fading CSI as full and statistical beamforming schemes,
respectively. We summarize our main contributions as follows:

\begin{enumerate}
	
	\item  To benchmark the performance of all the other schemes we first derive bounds on the rate achievable by the system under the availability of accurate CSI at both the APs and the UEs~(see Section~\ref{subsec:Accurate}.). 
	
	\item Following this, we analyze the achievable rates of the system, with the statistical CSI being ubiquitously available, but the fast fading CSI being obtained via pilot training at both the APs and the UEs. We note that the estimation errors in the fast fading components lead to the introduction of self interference impairing the system performance~(see Section~\ref{subsec:trained}.).
	
	\item Subsequently, we derive theoretical bounds on the system performance under pure statistical beamforming, and with no information being available about the fast fading channel components~(see Section~\ref{subsec:statistical}.). 
	
	\item We then obtain bounds on the performance achievable with statistical beamforming, but aided by downlink pilot training. This is seen to noticeably reduce the inter-stream interference caused by the fast fading channel components~(see Section~\ref{subsec:dtrain}.).
	
	
	\item Finally, via detailed simulations, we validate the derived theory and compare the performance of the four precoding schemes, and draw conclusions on the power and performance trade offs involved in LoS reliant CF-mMIMO systems~(see Section~\ref{sec:results}.).
	
\end{enumerate}

The key takeaway of this work is that, in the case of CF-mMIMO
systems, with a sufficiently large AP density, transmission schemes
that are designed to rely more on the LoS component of the channels
perform significantly well; almost at par with schemes that utilize
the full CSI for beamforming. This implies that having accurate and up
to date CSI of the fast fading channel components becomes unnecessary
in CF-mMIMO systems that possess sufficiently large AP densities,
thereby allowing us to skip pilot based CSI estimation in each
coherence interval. We next describe the system model, including the
system setup, the channel model, the LoS link probability model, and
the power control and channel estimation schemes used in this paper.

\section{System Model}

We consider a CF-mMIMO system with $M$ APs that operate in the time
division duplex~(TDD) mode, each equipped with a uniform linear array
of $N$ antennas, serving a total of $K \ll MN$ UEs. The height of the
$m $th AP is assumed to be $\ell_m$, with its antennas separated by a
distance $d$. Similarly, the $k$th UE's height is assumed to be
$\ell_k^{\prime}$.
\subsection{Channel Model}
We note that the line-of-sight~(LoS) path between a UE and an AP may
be obstructed due to the presence of blockages (e.g. buildings in
urban areas, trees/hillocks in rural areas, etc.). However, due to the
random nature of the locations of the UEs, APs, and these blockages,
the existence of LoS paths between these cannot be
guaranteed. Therefore, in this work, we characterize the channel
between a UE and an AP as a combination of both LoS and
NLoS~(non-line-of-sight) channels. As a result, the channel between
the $m$th AP and the $k$th UE,
$\mathbf{h}_{mk}\in\mathbb{C}^{N\times1}$, is given by
\begin{equation}
\mathbf{h}_{mk}=\alpha_{mk}\dot{\mathbf{h}}_{mk}+\sqrt{\beta_{mk}}\mathbf{\bar{h}}_{mk}.
\label{eq:channelcombo}
\end{equation}
\noindent Here $\mathbf{\bar{h}}_{mk}$ represents the fast fading
component of the NLoS channel that is assumed to consist of
independent and identically distributed~(i.i.d.) zero mean circularly
symmetric complex Gaussian (ZMCSCG) entries with unit variance,
i.e. $\mathbf{\bar{h}}_{mk} \sim \mathcal{C}\mathcal{N}(0, \textbf
I_N)$. The coefficient $\beta_{mk}$ denotes the slow fading path loss
component of the NLoS channel. Also, $\alpha_{mk}$ is a Bernoulli
random variable that indicates the presence or the absence of an LoS
component between the $m$th AP and the $k$th UE, such that
$\Pr\{\alpha_{mk}=1\}=q_{mk}$ (a detailed discussion on the modeling of $q_{mk}$ follows in Section~{\ref{sec:IIA}}).
Finally, $\dot{\mathbf{h}}_{mk}$ denotes the LoS channel, and is given as
\begin{equation}
\dot{\mathbf{h}}_{mk}=\mathbf{a}(\theta_{mk})\sqrt{{G_mG_k}}\left(\frac{\ell_{k}^{'}\ell_m}{4\pi x_{mk}}\right)e^{\iota2\pi\frac{x_{mk}}{\lambda_c}},
\end{equation} 
with $x_{mk}$ being the three dimensional link distance between the $k$th UE and the $m$th AP, $\lambda_c$ denoting the carrier wavelength, $\iota\triangleq\sqrt{-1}$, $G_m$ and $G_k$ being the gains associated with the antennas of the $m$th AP and $k$th UE respectively, and $\mathbf{a}(\theta_{mk})$ representing the array response vector for a signal leaving from the $m$th AP towards the $k$th UE at an angle $\theta_{mk}$, given as 
\begin{equation}
\label{eq:aoavec}
\mathbf{a}(\theta_{mk})=\left[1,e^{\iota2\pi\frac{d}{\lambda_c}\sin(\theta_{mk})},e^{\iota4\pi\frac{d}{\lambda_c}\sin(\theta_{mk})},\ldots,e^{\iota2(N-1)\pi\frac{d}{\lambda_c}\sin(\theta_{mk})}\right]^T.
\end{equation}
The downlink frame consists of three sub-frames, viz. uplink training comprising at most $K$ channel uses for acquisition of CSI at the APs, downlink training comprising at most $K$ channel uses for acquisition of the effective downlink channels at the UEs, and actual data transmission. We next describe the LoS link probability model used in this work. 
\subsection{LoS link probability model}~\label{sec:IIA}
We note that the channel between the $m$-th AP and the $k$-th UE may or may not contain an LoS component, depending on the existence of blockages between them, and the heights of these blockages. Therefore, the existence of an LoS link depends on geographical distribution parameters of the network, such as the building height distribution, building locations, heights of APs and UEs, etc. The existing 3GPP model~\cite{3GPPr112_2}, however, does not account for all these factors, and hence cannot characterize the exact LoS link probabilities. A more complete model for the LoS link probabilities has been discussed in~\cite{Atzeni_TWC_2018} using the ITU blockage model~\cite{Kim_TWC_2020}. Using this model, the LoS link probability $q_{mk}$ is given by
\begin{equation}
	\label{eq:losprob}
	q_{mk} =  (1 - \omega)^{\sqrt{\eta \, \mu \, d_{mk}}} 
\end{equation}
where $\omega \triangleq \sqrt{\frac{\pi}{2}}\frac{\rho}{\ell_m –
  \ell^{\prime}_k} \left[
  \text{erf}\left(\frac{\ell_m}{\rho\sqrt{2}}\right) -
  \text{erf}\left(\frac{\ell^{\prime}_k}{\rho\sqrt{2}}\right)\right]$,
and $\text{erf}(z) \triangleq
\frac{1}{\sqrt{\pi}}\int_{-z}^{z}e^{-t^2} dt$, $\rho$ is the average altitude of blockages,  $\eta$ is the fraction of the built up area, and $\mu$ is the average number of blockages per unit area~\cite{IMT2020propagate}. This completes the description of the channel model. We next describe the uplink channel estimation procedure. 

\subsection{Uplink Channel Estimation}
\label{sec:IIB}

For simplicity, we assume the availability of $K$ orthogonal uplink pilot sequences in each coherence interval, with the pilot symbol sent by the $k$th UE at the $n$th instant given as $\psi_k[n]$, such that $\sum_{n=1}^{T}\psi_k[n]\psi_{l}^*[n]=\delta[k-l]$, where $\delta[\cdot]$ denotes the Kronecker delta. We also assume that all the users share the same average pilot power that we can normalize to unity without the loss of generality, and the corresponding normalized noise power being given by $\sigma^2_u$. Therefore, the pilot signal received by the $m$th AP at the $n$th instant is given by
\begin{equation}
\mathbf{y}_m[n]=\sum_{k=1}^K \mathbf{h}_{mk}\psi_{k}[n]+\sigma_u\mathbf{w}_k[n].
\end{equation}
where $\mathbf{w}_k[n] \sim \mathcal{CN}(0,\textbf I_N)$ denotes the independent additive white Gaussian noise (AWGN). Defining $\mathbf{y}^{'}_{mk}=\left(\sum_{n=1}^{T} \mathbf{y}_{m}[n]\psi_{k}^*[n]\right) - \alpha_{mk}\mathbf{\dot h}_{mk}$, we obtain
\begin{equation}
\mathbf{y}^{'}_{mk}=  \sqrt{\beta_{mk}}\mathbf{\bar h}_{mk}  +\sigma_u\mathbf{w}^{'}_k[n].
\label{eq:processpilot}
\end{equation}
where $\mathbf{w}^{'}_k[n]$ is defined implicitly. This is done under
the assumption that the APs have an accurate estimate of the LoS CSI through long-term feedback techniques~\cite{3d_BF}. We note that it is common to assume the exact knowledge of the deterministic LoS component of $\mathbf{h}_{mk}$, and consider only the estimate of $\mathbf{\bar{h}}_{mk}$~\cite{stale_CSI}.

Now,
$
 \E \left[\mathbf{\bar h}_{mk}\mathbf{\bar h}^H_{mk}\right]=\mathbf{I}_N, 
$
and consequently,
$
 \E \left[\mathbf{y'}_{mk}\mathbf{y'}^H_{mk} \right]=( \beta_{mk} +\sigma^2_u)\mathbf{I}_N,
$
such that,
$
 \E \left[\mathbf{\bar h}_{mk}\mathbf{y'}^H_{mk} \right]=\sqrt{\beta_{mk}}\mathbf{I}_N.
$
Based on this, and \eqref{eq:processpilot}, we can now write the LMMSE estimate $\mathbf{\hat{h}}_{mk}$ of $\mathbf{h}_{mk}$ as 
\begin{equation}
\mathbf{\hat{h}}_{mk}=\frac{\sqrt{\beta_{mk}}} { \beta_{mk} +\sigma^2_u} \mathbf{y}^{'}_{mk}.
\end{equation}
We can now express the NLoS component of the channel in terms of its estimate, and the associated estimation error as,
\begin{equation}
\mathbf{\bar h}_{mk}=\mathbf{\hat{h}}_{mk}+\mathbf{\tilde{h}}_{mk},
\label{eq:ch_est}
\end{equation}
with $\mathbf{\tilde{h}}_{mk}$ being a ZMCSCG random vector with $\E [ \mathbf{\tilde{h}}_{mk} \mathbf{\tilde{h}}_{mk}^T  ] = \frac{\sigma^2_u} { \beta_{mk} +\sigma^2_u} \mathbf{I}_N$, and $\\E [\mathbf{\tilde{h}}_{mk}\mathbf{\hat{h}}^H_{mk}]=\mathbf{O}_N$, where $\mathbf{O}_N$ corresponds to the all zero, order $N$ square matrix.
At this point, it is also convenient to define the overall channel estimate available at the APs as 
$
\mathbf{\check h}_{mk}\triangleq\mathbf{\dot h}_{mk} + \mathbf{\hat{h}}_{mk}.
$
The APs then use the available channel estimates to precode the downlink data, as described in the next subsection.

\subsection{Downlink Signal Model and Power Control}
Letting $s_k$ be the symbol being sent to the $k$th user, such that $\mathbf{s}=[s_1,s_2,\ldots,s_K]^T$, and $\mathbf{\check{H}}_m=  \mathbf{\dot{H}}_m + \mathbf{\hat{H}}_m$ being the channel estimate available at the $m$th AP, we can write the downlink signal vector transmitted by the $m$th AP as
    $$\mathbf{y}_m = \mathbf{\check H}_m ^* \text{diag}(\mathbf{x}_m) \mathbf{s} = \textbf{P}_m \textbf{s},$$
    with the $i$th entry, $x_{mi}$ of $\mathbf{x}_{m}$ corresponding to the power allocated by the $m$th AP to the $i$th user's channel, and $\mathbf{P}_m$ being the effective precoder matrix at the $m$th AP. Also, $\textbf{y}_m$ can be seen as consisting of $K$ components corresponding to the $K$ users, such that $\textbf{y}_{mk}= \textbf{p}_{mk}s_k$, with $\textbf{p}_{mk}$ being the $k$th column of $\textbf{P}_m$.
    
    Now, the signal received at the $k$th UE, $r_k$, can be written as,
\begin{equation}
    r_k= \sum_{m=1}^M \mathbf{h}_{mk}^T \mathbf{y}_m +\sigma_o w_k = \left( \sum_{m=1}^M \mathbf{h}_{mk}^T \mathbf{P}_m \right)\mathbf{s} + \sigma_o w_k
\end{equation}
where $w_k$ is the additive white Gaussian noise with zero mean and unit variance, and $\sigma_o^2$ is the normalized noise power.
We now define the composite channel vector from all the APs the to the $k$th user, as $\boldsymbol{\gamma}_k$ whose $i$th element $\gamma_{ki}$ is the effective channel coefficient between the $k$th UE and the downlink data stream targeted towards the $i$th user, such that,
$
   \boldsymbol{\gamma}_k = \left(\sum_{m=1}^M \mathbf{h}_{mk}^T \mathbf{P}_m\right)^T, 
$
and
$
   \gamma_{ki} = \sum_{m=1}^M \mathbf{h}_{mk}^T \mathbf{p}_{mi}.
$
Also, since $ \mathbf{p}_{mi} = x_{mi} \mathbf{\check h}^*_{mi} $, $\gamma_{ki} = \sum_{m=1}^M  x_{mi} \mathbf{h}_{mk}^T  \mathbf{\check h}^*_{mi}$, where $\mathbf{\check h}^*_{mi}= \mathbf{\dot h}^*_{mi}+ \mathbf{\hat h}^*_{mi}$.
    
It is important to note that $\gamma_{kk}$ can be expressed as $\gamma_{kk}=\dot\gamma_{kk} + \bar\gamma_{kk} $, where
\begin{equation}
    \dot\gamma_{kk} = \sum_{m=1} ^M x_{mk}  \alpha_{mk} \mathbf{\dot h}_{mk}^T \mathbf{\dot h}_{mk}^*,
\end{equation}
and $\bar \gamma_{kk}$ is expanded as 
\begin{equation}
    \bar\gamma_{kk} = \sum_{m=1} ^M x_{mk} \left( \alpha_{mk} \left(  \mathbf{\dot h}_{mk}^T \mathbf{\hat h}_{mk}^* + \mathbf{\bar h}_{mk}^T\mathbf{\dot h}_{mk}^*   \right) + \mathbf{\bar h}_{mk}^T\mathbf{\hat h}_{mk}^* \right).
\end{equation}
The term $\dot\gamma_{kk}$ is the component of $\gamma_{kk}$ that consists of only statistical and slow fading terms that can be safely assumed to be known at the UEs, whereas $\bar\gamma_{kk}$ is composed of the fast fading channel coefficients. 
    
We consider two forms of downlink power control, viz. fixing the power transmitted by each AP, i.e. $\E[|\mathbf{y}_m \|^2]=\mathcal{E}_{\text{a,max}}$, and fixing the power expended for each UE, i.e. $\sum_{m=1}^M \E[\| y_{mk} \|^2]=\mathcal{E}_{\text{u,max}}$.

\subsection{Downlink LMMSE estimation of effective channel vector}
\label{sec:IID}

We again assume the availability of $K$ orthogonal downlink pilot sequences in each coherence interval, with the pilot symbol being sent to the $k$th UE at the $n$th instant given as $\xi_k[n]$, such that $\sum_{n=1}^{T}\xi_k[n]\xi_{l}^*[n]=\delta[k-l]$. We also assume $\sigma^2_d$ to be the noise power normalized by the downlink pilot power. Therefore, the pilot signal received by the $k$th AP at the $n$th instant is given by
\begin{equation}
y_k[n]=\sum_{i=1}^K \gamma_{ki}\xi_{i}[n]+\sigma_d w_k[n].
\end{equation}
\noindent where $w_k[n] \sim \mathcal{CN}(0, 1)$ denotes the independent additive white Gaussian noise (AWGN). Defining $y^{'}_{kl}=\sum_{n=1}^{T}y_{k}[n]\xi_{l}^*[n]$, we obtain
\begin{equation}
y^{'}_{kk}= \gamma_{kk}+\sigma_d w^{'}_k.
\end{equation}
Now letting
\begin{equation}
\bar y_{kk}= y^{'}_{kk} - \dot\gamma_{kk} = \bar\gamma_{kk} +\sigma_d w^{'}_k,
\end{equation}
we obtain
$
\E \left[  \bar y_{kk} \bar y_{kk}^*  \right] = \E \left[ \lvert \bar\gamma_{kk}\rvert^2 \right] + \sigma^2_d, 
$ 
$
 \E \left[ \bar \gamma_{kk}  \bar y_{kk}^*   \right] = \E \left[ \lvert \bar\gamma_{kk}\rvert^2 \right], 
$
and based on this, we can write the LMMSE estimate $\hat \gamma_{kk}$  of $\bar \gamma_{kk}$ as, 
\begin{equation}
\hat \gamma_{kk} =  \E \left[ \bar \gamma_{kk}  \bar y_{kk}^*   \right]  \E \left[  \bar y_{kk} \bar y_{kk}^*  \right]^{-1} \bar y_{kk}= \frac{\E \left[ \lvert \bar\gamma_{kk}\rvert^2 \right] \bar y_{kk}} {\E \left[ \lvert \bar\gamma_{kk}\rvert^2 \right] + \sigma^2_d} 
\end{equation}
Therefore, 
\begin{equation}
    \bar\gamma_{kk} =  \hat\gamma_{kk} + \Tilde\gamma_{kk}.
\end{equation}
with
   $\E [ \hat\gamma_{kk} \Tilde\gamma_{kk}^*] = 0,$ and
\begin{equation}
\label{eqn:gam_hat}
   \E \left[ \lvert \hat\gamma_{kk}\rvert^2 \right] = \frac{\left(\E \left[ \lvert \bar\gamma_{kk}\rvert^2 \right] \right)^2} {\E \left[ \lvert \bar\gamma_{kk}\rvert^2 \right] + \sigma^2_d} .
\end{equation}
%
Furthermore,
$  \E \left[ \lvert \bar\gamma_{kk}\rvert^2 \right] =  \E \left[ \lvert \hat\gamma_{kk}\rvert^2 \right] +  \E \left[ \lvert \tilde\gamma_{kk}\rvert^2 \right] $,
and similarly,
$
   \E \left[ \lvert \gamma_{kk}\rvert^2 \right] =  \E \left[ \lvert \dot\gamma_{kk}\rvert^2 \right] + \E \left[ \lvert \hat\gamma_{kk}\rvert^2 \right] +  \E \left[ \lvert \tilde\gamma_{kk}\rvert^2 \right].
$

We next evaluate the performance of the system described above, under various full and statistical beamforming schemes.  

\section{Performance Under Full Beamforming}
In this section, we discuss the case where the APs use full beamforming, that is, the knowledge of both the LoS and the fast fading components to beamform the data to the different UEs. We first consider the availability of accurate CSI at the APs and the UEs to obtain an upper bound on the system performance. 
\subsection {Accurate CSI at APs and UEs}\label{subsec:Accurate}
In case accurate CSI is available at all the APs, the precoder matrix becomes,
$\mathbf{P}_m = \mathbf{H}_m ^* \text{diag}(\mathbf{x}_m) $, and the signal received at the $k$th UE, $r_k$, can be written as,
\begin{equation}
    r_k = \left( \sum_{m=1}^M \mathbf{h}_{mk}^T \mathbf{P}_m \right)\mathbf{s} + \sigma_o w_k = \boldsymbol{\gamma}_k^T \mathbf{s}  +\sigma_o w_k.
\end{equation}
Equivalently,
 \begin{equation}
    r_k= \gamma_{kk}s_k + \sum^K_{ \substack{i=1 \\ i\neq k}} \gamma_{ki} s_i +\sigma_o  w_k,
    \label{eq:perfect_recd_sig}
 \end{equation}
where the first term indicates the signal of interest, the second indicates the inter-stream interference, and the last term denotes the AWGN. Now based on this, the achievable data rate for the $k$th user can be upper bounded according to Theorem~\ref{thm:perfect}.

\begin{theorem}
	An upper bound on the rate achievable by the $k$th UE can be written as,
	\begin{equation}
    R_k \leq  \log\left( 1+ \frac{\E \left[\lvert \gamma_{kk}\rvert^2\right]}{\sum^K_{ \substack{i=1 \\ i\neq k}} \E \left[\lvert\gamma_{ki} \rvert ^2\right] + \sigma^2_o  }\right),
\end{equation}
where,
\begin{align}
\label{eq:stat_acc_kk}
\E \left[\lvert\gamma_{kk}\rvert ^2\right] = \sum _{m=1} ^M |x_{mk}|^2 \left(q_{mk} \zeta_{mkk} ^2 + N (N+1)\beta_{mk}^2 + 2 (N+1) q_{mk} \beta_{mk} \zeta_{mkk}\right) \nonumber
\\+\sum_{m=1}^M \sum_{\substack{m'=1\\m'\neq m}}^M x_{mk}x_{m'k}  \left( q_{mk} \zeta_{mkk} + \beta_{mk} N \right)\left( q_{m'k} \zeta_{m'kk} + \beta_{m'k} N \right)^*,
\end{align}
and
\begin{align}
\label{eq:stat_acc_ki}
\E \left[\lvert\gamma_{ki}\rvert ^2\right] = \sum _{m=1} ^M |x_{mi}|^2 \left(q_{mk}q_{mi}  \lvert \zeta_{mki} \rvert ^2  + \beta_{mk} \beta_{mi} N + q_{mk} \beta_{mi} \zeta_{mkk}
+ \beta_{mk} q_{mi} \zeta_{mii} \right) \nonumber
\\
+\sum_{m=1}^M \sum_{\substack{m'=1\\m'\neq m}}^M x_{mi}x_{m'i} q_{mk} q_{mi} q_{m'k} q_{m'i}  \zeta_{mki} \zeta_{m'ki}^*,
\end{align}
 with $q_{mk} \triangleq \Pr\{\alpha_{mk}=1\}$, and $\zeta_{mki} \triangleq \mathbf{\dot h}_{mk} ^T \mathbf{\dot h}_{mi}^*$.
\label{thm:perfect}
\end{theorem}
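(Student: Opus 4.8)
The plan is to first reduce the rate bound to a second-moment computation, and then to evaluate those moments by exploiting the independence structure of the channel model; the genuine work lies entirely in the latter.

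Starting from \eqref{eq:perfect_recd_sig}, I would treat the inter-stream interference $\sum_{i\neq k}\gamma_{ki}s_i$ together with the AWGN $\sigma_o w_k$ as an effective noise uncorrelated with the desired symbol $s_k$. Granting the $k$th UE perfect knowledge of its effective channels $\{\gamma_{ki}\}$ (a genie that can only increase the achievable rate) and treating the interference-plus-noise as worst-case Gaussian, the instantaneous coherent rate is $\log(1+|\gamma_{kk}|^2/(\sum_{i\neq k}|\gamma_{ki}|^2+\sigma_o^2))$. Since $\log(1+\cdot)$ is concave, Jensen's inequality moves the expectation inside the logarithm to give the bound $\log(1+\E[\mathrm{SINR}])$, and replacing the expected SINR by the ratio of the mean signal power to the mean interference-plus-noise power yields the stated $\log(1+\E[|\gamma_{kk}|^2]/(\sum_{i\neq k}\E[|\gamma_{ki}|^2]+\sigma_o^2))$. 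This reduces the theorem to evaluating $\E[|\gamma_{kk}|^2]$ and $\E[|\gamma_{ki}|^2]$.

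For the signal power, I use that with accurate CSI $\mathbf{p}_{mi}=x_{mi}\mathbf h_{mi}^*$, so that $\gamma_{kk}=\sum_m x_{mk}\,\mathbf h_{mk}^T\mathbf h_{mk}^*=\sum_m x_{mk}\|\mathbf h_{mk}\|^2$ is a weighted sum of nonnegative terms that are independent across APs $m$. Squaring and taking expectations splits $\E[|\gamma_{kk}|^2]$ into a diagonal part $\sum_m|x_{mk}|^2\,\E[\|\mathbf h_{mk}\|^4]$ and an off-diagonal part $\sum_{m\neq m'}x_{mk}x_{m'k}^*\,\E[\|\mathbf h_{mk}\|^2]\,\E[\|\mathbf h_{m'k}\|^2]$, the latter factoring by AP-independence. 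Substituting $\mathbf h_{mk}=\alpha_{mk}\mathbf{\dot h}_{mk}+\sqrt{\beta_{mk}}\mathbf{\bar h}_{mk}$ and using $\alpha_{mk}^2=\alpha_{mk}$, $\E[\alpha_{mk}]=q_{mk}$ and $\E[\|\mathbf{\bar h}_{mk}\|^2]=N$, the first moment is $\E[\|\mathbf h_{mk}\|^2]=q_{mk}\zeta_{mkk}+N\beta_{mk}$, which reproduces the off-diagonal factors in \eqref{eq:stat_acc_kk}. For the fourth moment I expand $\|\mathbf h_{mk}\|^2$ into its LoS, NLoS and cross parts; the key Gaussian facts are $\E[\|\mathbf{\bar h}_{mk}\|^4]=N(N+1)$ (the second moment of a sum of $N$ unit-mean exponentials) and the vanishing of all odd-order terms, which collapses the expansion to $q_{mk}\zeta_{mkk}^2+N(N+1)\beta_{mk}^2+2(N+1)q_{mk}\beta_{mk}\zeta_{mkk}$, i.e. the diagonal term of \eqref{eq:stat_acc_kk}. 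For $i\neq k$ I proceed analogously with $\gamma_{ki}=\sum_m x_{mi}\,\mathbf h_{mk}^T\mathbf h_{mi}^*$: AP-independence splits $\E[|\gamma_{ki}|^2]$ into $\sum_m|x_{mi}|^2\,\E[|\mathbf h_{mk}^T\mathbf h_{mi}^*|^2]$ and off-diagonal terms factoring through $\E[\mathbf h_{mk}^T\mathbf h_{mi}^*]=q_{mk}q_{mi}\zeta_{mki}$ (all NLoS contributions drop from the mean since $\mathbf{\bar h}_{mk}$ and $\mathbf{\bar h}_{mi}$ are independent and zero mean for $i\neq k$), giving the off-diagonal term of \eqref{eq:stat_acc_ki}; expanding $\mathbf h_{mk}^T\mathbf h_{mi}^*$ into its four LoS/NLoS products kills every cross term and, via $\E[\mathbf{\bar h}\mathbf{\bar h}^H]=\mathbf I_N$ and $\E[|\mathbf{\bar h}_{mk}^T\mathbf{\bar h}_{mi}^*|^2]=N$, leaves $q_{mk}q_{mi}|\zeta_{mki}|^2+N\beta_{mk}\beta_{mi}+q_{mk}\beta_{mi}\zeta_{mkk}+\beta_{mk}q_{mi}\zeta_{mii}$.

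The main obstacle is not conceptual but the \emph{disciplined bookkeeping} of the higher-order moments: cleanly separating the diagonal ($m=m'$) from the factorable off-diagonal ($m\neq m'$) contributions, tracking the Bernoulli moments through $\alpha_{mk}^2=\alpha_{mk}$, and verifying that precisely the odd-order Gaussian terms and the cross terms built from distinct, independent NLoS vectors vanish. The two nonroutine identities $\E[\|\mathbf{\bar h}_{mk}\|^4]=N(N+1)$ and $\E[|\mathbf{\bar h}_{mk}^T\mathbf{\bar h}_{mi}^*|^2]=N$ are where care is most needed; the remainder is linear-algebra expansion that I would organize by the number of LoS factors appearing in each term.
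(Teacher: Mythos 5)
Your moment computations are essentially identical to the paper's Appendix~A and are correct: the diagonal/off-diagonal split over APs, the factorization of the $m\neq m'$ terms through $\E[\mathbf h_{mk}^T\mathbf h_{mk}^*]=q_{mk}\zeta_{mkk}+N\beta_{mk}$ (and through $\E[\mathbf h_{mk}^T\mathbf h_{mi}^*]=q_{mk}q_{mi}\zeta_{mki}$ for $i\neq k$), the use of $\alpha_{mk}^2=\alpha_{mk}$, and the two Gaussian identities $\E[\|\bar{\mathbf h}_{mk}\|^4]=N(N+1)$ and $\E[|\bar{\mathbf h}_{mk}^T\bar{\mathbf h}_{mi}^*|^2]=N$ are exactly the ingredients the paper uses to arrive at \eqref{eq:stat_acc_kk} and \eqref{eq:stat_acc_ki}.

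There is, however, a genuine gap in your reduction to the rate bound. You start from the fully instantaneous SINR, apply Jensen to get $R_k\le\log(1+\E[\mathrm{SINR}])$ with $\mathrm{SINR}=|\gamma_{kk}|^2/(\sum_{i\neq k}|\gamma_{ki}|^2+\sigma_o^2)$, and then ``replace the expected SINR by the ratio of the mean signal power to the mean interference-plus-noise power.'' That replacement is not an inequality in the direction you need: writing $X=|\gamma_{kk}|^2$ and $Y=\sum_{i\neq k}|\gamma_{ki}|^2+\sigma_o^2$, convexity of $y\mapsto 1/y$ gives (e.g.\ under independence) $\E[X/Y]\ge\E[X]/\E[Y]$, so $\log(1+\E[X]/\E[Y])$ sits \emph{below} $\log(1+\E[\mathrm{SINR}])$ and cannot be deduced as an upper bound on $R_k$ from your chain. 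The paper avoids this entirely by defining the achievable rate in \eqref{eq:Rk_perfect} with the \emph{averaged} interference power $\sum_{i\neq k}\E[|\gamma_{ki}|^2]$ already in the denominator (the standard treat-interference-as-Gaussian-noise-of-known-average-power convention of~\cite{HandH}); the denominator is then a deterministic constant, Jensen acts only on the numerator, and the stated bound follows immediately. To repair your argument you should adopt that starting point rather than the instantaneous-interference SINR.
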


\begin{proof}

Treating the interference in~\eqref{eq:perfect_recd_sig} as noise~\cite{HandH}, and under the assumption that $\E \left[s_i s_j^*\right] = \delta\left[i-j\right]$, that is, the symbols sent to the different users are independent of each other, the rate achievable by the $k$th user, $R_k$, can be written as,
\begin{equation}
    R_k = \E \left[ \log\left(1+ \frac{\lvert \gamma_{kk}\rvert^2}{\sum^K_{ \substack{i=1 \\ i\neq k}} \E \left[\lvert\gamma_{ki} \rvert ^2 \right] + \sigma^2_o  }\right)\right].
    \label{eq:Rk_perfect}
\end{equation}

Invoking Jensen's inequality, we can upper bound~\eqref{eq:Rk_perfect} as,
\begin{equation}
    R_k \leq  \log\left( 1+ \frac{\E \left[\lvert \gamma_{kk}\rvert^2\right]}{\sum^K_{ \substack{i=1 \\ i\neq k}} \E \left[\lvert\gamma_{ki} \rvert ^2\right] + \sigma^2_o.  }\right).
\end{equation}
The statistics for the effective direct and interfering downlink channels have been derived in Appendix~\ref{app:stat_perfect}. 
\end{proof}

\subsection{Estimated CSI at APs and UEs}
\label{subsec:trained}

In this case, the precoding matrix consists of the estimated  uplink channel coefficients, as discussed in Section~\ref{sec:IIB}, and therefore, $\mathbf{P}_m =\left (\mathbf{\dot H}^*_m diag(\boldsymbol{\alpha_m}) + \mathbf{ \hat H}^*_m\right)\mathbf{X}_m = \mathbf{\check H}^{*}_m \mathbf{X}_m$.

Based on this, we can write the effective downlink channel for the $k$th user's data stream as, $\gamma_{kk} = \dot\gamma_{kk} + \bar\gamma_{kk}$, with the slow fading component $\dot\gamma_{kk}$ accurately known and the UEs and, $\bar\gamma_{kk}$ being estimated as $\hat\gamma_{kk}$, as described in Section~\ref{sec:IID}.

Therefore, the received signal at the $k$th user, $r_k$, can be expressed as,
 \begin{equation}
     r_k = \left( \dot\gamma_{kk} + \hat\gamma_{kk} \right)s_k+ \tilde\gamma_{kk}s_k + \sum^K_{ \substack{i=1 \\ i\neq k}} \gamma_{ki} s_i + \sigma_o w_k,
     \label{eq:est_recd_sig}
 \end{equation}
where the first term denotes the desired signal, the second term denotes the self interference due to channel estimation error, the third term denotes the inter-stream interference and the last term is the AWGN. Now based on this, the achievable data rate for the $k$th user can be upper bounded according to Theorem~\ref{thm:est}.

\begin{theorem}\normalfont
\label{thm:est}
	An upper bound on the rate achievable by the $k$th UE can be written as,
	\begin{equation}
    R_k \leq  \log\left( 1+ \frac{\E \left[\lvert \dot\gamma_{kk}\rvert^2\right] + \E \left[\lvert \hat\gamma_{kk}\rvert^2\right]}{ \E \left[ \lvert \tilde\gamma_{kk}   \rvert^2 \right] +\sum^K_{ \substack{i=1 \\ i\neq k}} \E \left[\lvert\gamma_{ki} \rvert ^2\right] + \sigma^2_o  }\right)
\end{equation}

where, 
\begin{multline}
\label{eq:est_acc_ki}
\E \left[\lvert\gamma_{ki}\rvert ^2\right] = \sum _{m=1} ^M |x_{mi}|^2 \Bigg(q_{mk}q_{mi} \lvert\zeta_{mki}\rvert^2  +N \beta_{mk} \beta_{mi} \frac{ \beta_{mi}}{\beta_{mi}+\sigma_u^2} + q_{mk} \beta_{mi} \frac { \beta_{mi} }{\beta_{mi}+\sigma_u^2}\zeta_{mkk} 
\\  + q_{mi} \beta_{mk} \zeta_{mii} \Bigg)
+\sum_{m=1}^M \sum_{\substack{m'=1\\m'\neq m}}^M x_{mi}x_{m'i} q_{mk} q_{mi} q_{m'k} q_{m'i}  \zeta_{mki} \zeta_{m'ki}^*,
\end{multline}

\begin{equation}
\label{eq:stat_dot}
    \E \left[\lvert \dot \gamma_{kk} \rvert ^2\right] =\sum_{m=1}^M \lvert x_{mk}\rvert^2 q_{mk} \zeta_{mkk}^2
    + \sum_{m=1}^M \sum_{\substack{m'=1\\m'\neq m}}^M  x_{mi}x_{m'i} q_{mk} q_{m'k}  \zeta_{mkk} \zeta_{m'kk}^*,
\end{equation}

\begin{align}
\label{eq:est_bar}
\E \left[\lvert\hat{\gamma}_{kk}\rvert ^2\right] = \sum _{m=1} ^M |x_{mk}|^2 \left(  N(N+1)\beta_{mk}^2 \left(\frac{\beta_{mk}}{\beta_{mk}+\sigma_u^2}\right)^2  +N \beta_{mk}^2 \left(\frac{\sqrt{\beta_{mk}} \sigma_u}{\beta_{mk}+\sigma_u^2}\right)^2 + \right.  
\\ \left. q_{mk} \beta_{mk} \left( 1 + \frac{\beta_{mk}}{\beta_{mk}+\sigma_u^2}\right) \zeta_{mkk} 
+ 2 N q_{mk}  \beta_{mk} \frac{\beta_{mk} }{\beta_{mk} + \sigma_u^2}  \zeta_{mkk}   \right) \nonumber
\\
+\sum_{m=1}^M \sum_{\substack{m'=1\\m'\neq m}}^M x_{mk}x_{m'k}  \left(q_{mk}\zeta_{mkk} N \beta_{m'k} \frac{ \beta_{m'k}}{\beta_{m'k}+ \sigma_u^2}  \nonumber
\right. \\ \left.
+ q_{m'k}\zeta_{m'kk}  N\beta_{mk} \frac{ \beta_{mk}}{\beta_{mk}+ \sigma_u^2} 
 + N^2\beta_{mk} \beta_{m'k} \frac{ \beta_{mk}}{\beta_{mk}+ \sigma_u^2} \frac{ \beta_{m'k}}{\beta_{m'k}+ \sigma_u^2}\right), \nonumber
\end{align}
 
$\E \left[ \lvert \hat\gamma_{kk}\rvert^2 \right] = \frac{\left(\E \left[ \lvert \bar\gamma_{kk}\rvert^2 \right] \right)^2} {\E \left[ \lvert \bar\gamma_{kk}\rvert^2 \right] + \sigma^2_d} ,$
and $\E \left[ \lvert \tilde\gamma_{kk}\rvert^2 \right] = \frac{\E \left[ \lvert \bar\gamma_{kk}\rvert^2 \right] \sigma^2_d } {\E \left[ \lvert \bar\gamma_{kk}\rvert^2 \right] + \sigma^2_d},$
 with $q_{mk} \triangleq \Pr\{\alpha_{mk}=1\}$, and $\zeta_{mki} \triangleq \mathbf{\dot h}_{mk} ^T \mathbf{\dot h}_{mi}^*$.
\end{theorem}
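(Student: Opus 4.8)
The plan is to mirror the argument of Theorem~\ref{thm:perfect}, adapting it to the mismatch between the true effective channel $\gamma_{kk}$ and the quantity $\dot\gamma_{kk}+\hat\gamma_{kk}$ that the $k$th UE actually possesses after downlink training. Starting from the received signal~\eqref{eq:est_recd_sig}, I would treat the estimation-error self-interference term $\tilde\gamma_{kk}s_k$, the inter-stream interference $\sum_{i\neq k}\gamma_{ki}s_i$, and the noise $\sigma_o w_k$ jointly as an uncorrelated effective noise, so that the standard ``treat interference as noise''/use-and-then-forget bound~\cite{HandH} gives an achievable rate
\[
R_k=\E\left[\log\left(1+\frac{\lvert\dot\gamma_{kk}+\hat\gamma_{kk}\rvert^2}{\E\left[\lvert\tilde\gamma_{kk}\rvert^2\right]+\sum_{\substack{i=1\\i\neq k}}^K\E\left[\lvert\gamma_{ki}\rvert^2\right]+\sigma_o^2}\right)\right].
\]
Here the denominator terms are replaced by their means because only their aggregate power is available as side information at the receiver, exactly as for the interference in~\eqref{eq:Rk_perfect}. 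Applying Jensen's inequality to the concave $\log(1+c/x)$ structure then pushes the expectation into the numerator and yields the claimed bound, provided $\E[\lvert\dot\gamma_{kk}+\hat\gamma_{kk}\rvert^2]=\E[\lvert\dot\gamma_{kk}\rvert^2]+\E[\lvert\hat\gamma_{kk}\rvert^2]$.

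The decoupling of the numerator is the first substantive step. I would argue that $\dot\gamma_{kk}$, $\hat\gamma_{kk}$ and $\tilde\gamma_{kk}$ are mutually uncorrelated: $\E[\hat\gamma_{kk}\tilde\gamma_{kk}^*]=0$ is the LMMSE orthogonality already recorded in Section~\ref{sec:IID}, while $\dot\gamma_{kk}\perp\bar\gamma_{kk}$ follows because $\bar\gamma_{kk}$ is linear in the zero-mean fast-fading vectors $\mathbf{\bar h}_{mk}$ (and in the downlink pilot noise entering $\hat\gamma_{kk}$ through $\bar y_{kk}$), which are independent of the Bernoulli indicators $\alpha_{mk}$ and of the deterministic LoS responses making up $\dot\gamma_{kk}$; conditioning on the fast fading annihilates every cross term. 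Since $\hat\gamma_{kk}$ is a scalar multiple of $\bar y_{kk}=\bar\gamma_{kk}+\sigma_d w_k'$, the same conditioning gives $\E[\dot\gamma_{kk}\hat\gamma_{kk}^*]=0$, which both validates the split used above and reproduces the stated identity $\E[\lvert\gamma_{kk}\rvert^2]=\E[\lvert\dot\gamma_{kk}\rvert^2]+\E[\lvert\hat\gamma_{kk}\rvert^2]+\E[\lvert\tilde\gamma_{kk}\rvert^2]$.

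What remains, and where essentially all the work lies, is the explicit evaluation of the four second moments. I would compute $\E[\lvert\dot\gamma_{kk}\rvert^2]$, $\E[\lvert\bar\gamma_{kk}\rvert^2]$ and $\E[\lvert\gamma_{ki}\rvert^2]$ directly, then obtain $\E[\lvert\hat\gamma_{kk}\rvert^2]$ and $\E[\lvert\tilde\gamma_{kk}\rvert^2]$ by substituting $\E[\lvert\bar\gamma_{kk}\rvert^2]$ into the LMMSE relations recorded in~\eqref{eqn:gam_hat}. Each moment expands into a double sum over APs $m,m'$, with the $m=m'$ diagonal and the $m\neq m'$ off-diagonal parts handled separately. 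The main obstacle is the bookkeeping of expectations of products of up to four correlated complex Gaussian entries drawn from $\mathbf{\bar h}_{mk}$, $\mathbf{\hat h}_{mk}$ and $\mathbf{\tilde h}_{mk}$: terms such as $\E[\lvert\mathbf{\bar h}_{mk}^T\mathbf{\hat h}_{mk}^*\rvert^2]$ produce the $N(N+1)$ factors through the Gaussian fourth-moment (Isserlis/Wick) identity, while cross terms pairing $\mathbf{\dot h}_{mk}^T\mathbf{\hat h}_{mk}^*$ with $\mathbf{\bar h}_{mk}^T\mathbf{\dot h}_{mk}^*$ contribute the $\zeta_{mki}$-weighted pieces once the path-loss weights $\beta_{mk}$ and the uplink estimation covariances from Section~\ref{sec:IIB} (carrying the quality factors $\beta_{mk}/(\beta_{mk}+\sigma_u^2)$ and $\sigma_u^2/(\beta_{mk}+\sigma_u^2)$) are inserted. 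Averaging over the independent Bernoulli variables via $\E[\alpha_{mk}]=q_{mk}$, together with $\E[\alpha_{mk}\alpha_{m'k}]=q_{mk}q_{m'k}$ and $\E[\alpha_{mk}\alpha_{mi}]=q_{mk}q_{mi}$ from the assumed independence of the LoS indicators across APs and users, collects the $q$-weightings; and independence of the fast fading across distinct APs forces the off-diagonal sums to retain only the LoS $\zeta$ products. Carrying this through for the direct and interfering streams gives~\eqref{eq:stat_dot}, \eqref{eq:est_bar} and~\eqref{eq:est_acc_ki}. I would relegate the full bookkeeping to an appendix, exactly as done for Theorem~\ref{thm:perfect}.
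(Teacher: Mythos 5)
Your proposal follows the paper's proof essentially verbatim: the received signal \eqref{eq:est_recd_sig} is bounded by lumping the self-interference $\tilde\gamma_{kk}s_k$, the inter-stream interference, and the noise into an effective noise term to get \eqref{eq:Rk_est}, Jensen's inequality is then invoked, and all of the second-moment bookkeeping (Gaussian fourth moments, Bernoulli averaging, LMMSE quality factors) is deferred to Appendix~\ref{app:stat_est}, exactly as the authors do. The only minor caveat is that your justification of $\E[\dot\gamma_{kk}\hat\gamma_{kk}^*]=0$ via linearity of $\bar\gamma_{kk}$ in zero-mean fast-fading vectors overlooks the quadratic term $\mathbf{\bar h}_{mk}^T\mathbf{\hat h}_{mk}^*$, whose mean $N\beta_{mk}/(\beta_{mk}+\sigma_u^2)$ is nonzero; however, the paper relies on the same tacit decomposition, so this does not distinguish your route from theirs.
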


\begin{proof}

Treating the interference in~\eqref{eq:est_recd_sig} as noise, assuming $\E \left[s_i s_j^*\right] = \delta\left[i-j\right]$, the rate achievable by the $k$th user, $R_k$, can be written as,

\begin{equation}
\label{eq:Rk_est}
    R_k = \E \left[ \log\left(1+ \frac{\lvert \dot\gamma_{kk} + \hat\gamma_{kk}\rvert^2}{  \E \left[ \lvert \tilde\gamma_{kk}   \rvert^2 \right] + \sum^K_{ \substack{i=1 \\ i\neq k}} \E \left[ \lvert\gamma_{ki} \rvert ^2 \right] + \sigma^2_o  }\right)\right].
\end{equation}
Invoking Jensen's inequality we can upper bound~\eqref{eq:Rk_est} as,
\begin{equation}
    R_k \leq  \log\left( 1+ \frac{\E \left[\lvert \dot\gamma_{kk}\rvert^2\right] + \E \left[\lvert \hat\gamma_{kk}\rvert^2\right]}{ \E \left[ \lvert \tilde\gamma_{kk}   \rvert^2 \right] +\sum^K_{ \substack{i=1 \\ i\neq k}} \E \left[\lvert\gamma_{ki} \rvert ^2\right] + \sigma^2_o  }\right).
\end{equation}
The statistics for the effective direct and interfering downlink channels, have been derived in Appendix~\ref{app:stat_est}. 
\end{proof}

Having looked at the performance of the CF-mMIMO system under the availability of accurate/ estimated fast fading components, we next consider the case where no information about the fast fading components is available at the CPU/ APs. As a consequence, the APs use only the information about the LoS components to beamform the data to the respective UEs.

\section{Performance Analysis with Statistical Beamforming}

We now consider the case where the BS employs statistical beamforming in the downlink~\cite{3d_BF}. It is important to note that, even in the absence of any fast fading CSI at the UEs, the APs can still perform limited downlink training to allow the UEs to estimate the effective downlink channels, and use those to detect the received data symbols. As a consequence, we consider two variants of the statistical downlink beamforming scheme, viz. with and without downlink pilots. 
However, in either case, the precoding matrix is composed of only the LoS components of the channel matrix and is given as  $\mathbf{P}_m =\mathbf{\dot H}^*_m \text{diag}(\boldsymbol{\alpha}_m) \mathbf{X}_m $. 
Consequently, 
\begin{equation}
    \gamma_{ki} = \sum_{m=1} ^M x_{mi} \alpha_{mi} \mathbf{h}_{mk}^T \mathbf{\dot h}_{mi}^*,
\end{equation}
with $\gamma_{kk}=\dot{\gamma}_{kk}+\bar{\gamma}_{kk}$ such that,
$\dot\gamma_{kk} = \sum_{m=1} ^M x_{mk} \alpha_{mk}  \mathbf{\dot h}_{mk}^T \mathbf{\dot h}_{mk}^*, $
and
$
    \bar\gamma_{kk} = \sum_{m=1} ^M x_{mk} \alpha_{mk} \sqrt{\beta_{mk}} \mathbf{\bar h}_{mk}^T \mathbf{\dot h}_{mk}^*.  
$
We first consider the case where no downlink pilots are transmitted by the APs.
\subsection{No Downlink Pilots}
\label{subsec:statistical}
We note that, 
 \begin{equation}
 \label{eq:stat_recd_sig}
     r_k= \dot\gamma_{kk}s_k+ \bar\gamma_{kk}s_k + \sum^K_{ \substack{i=1 \\ i\neq k}} \gamma_{ki} s_i + \sigma_o w_k,
 \end{equation}
with the first term denoting the desired signal, the second term denoting the self interference due to the fast fading component of the channel, the third term denoting the inter-stream interference and the last term being the AWGN. Now based on this, the achievable data rate for the $k$th user can be upper bounded according to Theorem~\ref{thm:stat}.

\begin{theorem}\normalfont
\label{thm:stat}
	An upper bound on the rate achievable by the $k$th UE can be written as,
	\begin{equation}
    R_k \leq  \log\left( 1+ \frac{\E \left[\lvert \dot\gamma_{kk}\rvert^2\right]}{ \E \left[ \lvert \bar\gamma_{kk}   \rvert^2 \right] +\sum^K_{ \substack{i=1 \\ i\neq k}} \E \left[\lvert\gamma_{ki} \rvert ^2\right] + \sigma^2_o  }\right)
\end{equation}

where,

\begin{equation}
\begin{split}
\label{eq:stat_ki1}
    \E \left[\lvert \gamma_{ki} \rvert ^2\right] =\sum_{m=1}^M \lvert x_{mi}\rvert^2 \left(  q_{mk} q_{mi} \lvert \zeta_{mki} \rvert ^2 + \beta_{mk} q_{mi} \zeta_{mii} \right)
    \\+ \sum_{m=1}^M \sum_{\substack{m'=1\\m'\neq m}}^M  x_{mi}x_{m'i} q_{mk} q_{mi} q_{m'k} q_{m'i}  \zeta_{mki} \zeta_{m'ki}^*,
\end{split}
\end{equation}

\begin{equation}
\label{eq:stat_dot1}
    \E \left[\lvert \dot \gamma_{kk} \rvert ^2\right] =\sum_{m=1}^M \lvert x_{mk}\rvert^2 q_{mk} \zeta_{mkk}^2
    + \sum_{m=1}^M \sum_{\substack{m'=1\\m'\neq m}}^M  x_{mi}x_{m'i} q_{mk} q_{m'k}  \zeta_{mkk} \zeta_{m'kk}^*,
\end{equation}
and
\begin{equation}
 \E \left[ \lvert \bar{\gamma}_{kk}   \rvert^2 \right] =\sum_{m=1}^M \lvert x_{mk}\rvert^2\beta_{mk} q_{mk} \zeta_{mkk},
\end{equation}
with $q_{mk} \triangleq \Pr\{\alpha_{mk}=1\}$, and $\zeta_{mki} \triangleq \mathbf{\dot h}_{mk} ^T \mathbf{\dot h}_{mi}^*$.
	
\end{theorem}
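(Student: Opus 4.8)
The plan is to follow the same two-stage template used for Theorems~\ref{thm:perfect} and~\ref{thm:est}: first obtain the rate bound by a Jensen argument, then evaluate the three second-order moments appearing in it. To begin, I would start from the received signal~\eqref{eq:stat_recd_sig}, designating $\dot\gamma_{kk}$ as the known (statistical) effective gain and treating the self-interference term $\bar\gamma_{kk}s_k$, the inter-stream interference $\sum_{i\neq k}\gamma_{ki}s_i$, and the AWGN collectively as uncorrelated noise, with the symbols satisfying $\E[s_is_j^*]=\delta[i-j]$. This gives the instantaneous-rate expression $R_k=\E[\log(1+|\dot\gamma_{kk}|^2/D)]$ with $D=\E[|\bar\gamma_{kk}|^2]+\sum_{i\neq k}\E[|\gamma_{ki}|^2]+\sigma_o^2$. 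Since $a\mapsto\log(1+a/D)$ is concave for fixed $D>0$, Jensen's inequality moves the expectation inside the numerator and yields the stated bound. This step is essentially mechanical and identical in spirit to the earlier theorems.

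The substance lies in the three moment computations, all driven by the independence of the Bernoulli indicators $\alpha_{mk}$ across both AP and UE indices, the identity $\alpha_{mk}^2=\alpha_{mk}$ (so $\E[\alpha_{mk}]=\E[\alpha_{mk}^2]=q_{mk}$), the mutual independence and zero-mean property of the NLoS vectors $\bar{\mathbf{h}}_{mk}$ across APs, and the scalar second-moment identity $\E[|\bar{\mathbf{h}}_{mk}^T\dot{\mathbf{h}}_{mi}^*|^2]=\|\dot{\mathbf{h}}_{mi}\|^2=\zeta_{mii}$, which follows from the unit-variance i.i.d.\ entries. For $\E[|\dot\gamma_{kk}|^2]$, I would write $\dot\gamma_{kk}=\sum_m x_{mk}\alpha_{mk}\zeta_{mkk}$ using $\dot{\mathbf{h}}_{mk}^T\dot{\mathbf{h}}_{mk}^*=\zeta_{mkk}$, expand $|\dot\gamma_{kk}|^2$ into diagonal ($m=m'$) and off-diagonal ($m\neq m'$) sums, and take the Bernoulli expectation to obtain $\E[\alpha_{mk}^2]=q_{mk}$ on the diagonal and $\E[\alpha_{mk}\alpha_{m'k}]=q_{mk}q_{m'k}$ off it, giving~\eqref{eq:stat_dot1}. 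For $\E[|\bar\gamma_{kk}|^2]$, the off-diagonal terms vanish because the $\bar{\mathbf{h}}_{mk}$ are independent and zero-mean across $m$, leaving only the diagonal contributions $\sum_m|x_{mk}|^2\beta_{mk}q_{mk}\zeta_{mkk}$.

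The main obstacle is $\E[|\gamma_{ki}|^2]$. Here I would substitute the channel decomposition $\mathbf{h}_{mk}=\alpha_{mk}\dot{\mathbf{h}}_{mk}+\sqrt{\beta_{mk}}\bar{\mathbf{h}}_{mk}$ into $\gamma_{ki}=\sum_m x_{mi}\alpha_{mi}\mathbf{h}_{mk}^T\dot{\mathbf{h}}_{mi}^*$, so that each AP contributes a deterministic LoS part $\alpha_{mi}\alpha_{mk}\zeta_{mki}$ and a fast-fading part $\alpha_{mi}\sqrt{\beta_{mk}}(\bar{\mathbf{h}}_{mk}^T\dot{\mathbf{h}}_{mi}^*)$. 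Squaring produces four families of cross terms; the LoS--fading and the fading--fading (with $m\neq m'$) families all vanish by the zero-mean independence of the fast-fading vectors, which is the crucial simplification and also the most error-prone bookkeeping point, since $\alpha_{mi}$ enters through the precoder in addition to $\alpha_{mk}$ arising from the channel. What survives is, on the diagonal, $q_{mk}q_{mi}|\zeta_{mki}|^2$ (the four-indicator moment collapses to $q_{mk}q_{mi}$ via $\alpha^2=\alpha$) together with $\beta_{mk}q_{mi}\zeta_{mii}$ (using the identity above), and off the diagonal, $q_{mk}q_{mi}q_{m'k}q_{m'i}\zeta_{mki}\zeta_{m'ki}^*$ by mutual independence of all four indicators, thereby reproducing~\eqref{eq:stat_ki1}. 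Substituting these three moments into the Jensen bound completes the argument; the only genuine care required is in tracking which cross terms of $\E[|\gamma_{ki}|^2]$ survive and evaluating the mixed Bernoulli moments correctly.
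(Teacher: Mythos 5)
Your proposal is correct and follows essentially the same route as the paper: treat the self-interference and inter-stream terms as noise, apply Jensen's inequality to move the expectation into the numerator, and then compute the three second moments by expanding into diagonal and cross-AP terms, using $\E[\alpha_{mk}^2]=q_{mk}$, the independence of the indicators, and the zero-mean fast-fading vectors to kill the cross terms exactly as in Appendix~\ref{app:stat_stat}. No gaps.
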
 

\begin{proof}

Treating the interference in~\eqref{eq:stat_recd_sig} as noise, assuming $\E \left[s_i s_j^*\right] = \delta\left[i-j\right]$, we can write,
\begin{equation}
\label{eq:Rk_stat}
    R_k = \E \left[ \log\left(1+ \frac{\lvert \dot\gamma_{kk}\rvert^2}{  \E \left[ \lvert \bar\gamma_{kk}   \rvert^2 \right] + \sum^K_{ \substack{i=1 \\ i\neq k}} \E \left[ \lvert\gamma_{ki} \rvert ^2 \right] + \sigma^2_o  }\right)\right].
\end{equation}
Invoking Jensen's inequality, we can upper bound~\eqref{eq:Rk_stat} as
\begin{equation}
    R_k \leq  \log\left( 1+ \frac{\E \left[\lvert \dot\gamma_{kk}\rvert^2\right]}{ \E \left[ \lvert \bar\gamma_{kk}   \rvert^2 \right] +\sum^K_{ \substack{i=1 \\ i\neq k}} \E \left[\lvert\gamma_{ki} \rvert ^2\right] + \sigma^2_o  }\right).
\end{equation}
The statistics for the effective direct and interfering downlink channels, have been derived in Appendix~\ref{app:stat_stat}. 

\end{proof}

\subsection{With Limited Downlink Training}
\label{subsec:dtrain}
 Using limited downlink pilots, we can estimate the fast fading component of the effective downlink channel to the $k$th user, $\bar\gamma_{kk}$ as $\hat\gamma_{kk}$. We note that the received signal can be expressed as
 \begin{equation}
 \label{eq:stat2_recd_sig}
     r_k= \left( \dot\gamma_{kk} + \hat\gamma_{kk} \right)s_k+ \tilde\gamma_{kk}s_k + \sum^K_{ \substack{i=1 \\ i\neq k}} \gamma_{ki} s_i + \sigma_o w_k,
 \end{equation}
with the first term denoting the desired signal, the second term denoting the self interference due to the error in estimating the fast fading component of the channel, the third term denoting the inter-stream interference and the last term being the AWGN. Based on this, the achievable data rate for the $k$th user can be upper bounded according to Theorem~\ref{thm:stat_2}.

\begin{theorem}\normalfont
\label{thm:stat_2}
	An upper bound on the rate achievable by the $k$th UE can be written as,
	\begin{equation}
    R_k \leq  \log\left( 1+ \frac{\E \left[\lvert \dot\gamma_{kk}\rvert^2\right] + \E \left[\lvert \hat\gamma_{kk}\rvert^2\right]}{ \E \left[ \lvert \tilde\gamma_{kk}   \rvert^2 \right] +\sum^K_{ \substack{i=1 \\ i\neq k}} \E \left[\lvert\gamma_{ki} \rvert ^2\right] + \sigma^2_o  }\right)
\end{equation}

where,

\begin{equation}
\begin{split}
\label{eq:stat_ki2}
    \E \left[\lvert \gamma_{ki} \rvert ^2\right] =\sum_{m=1}^M \lvert x_{mi}\rvert^2 \left(  q_{mk} q_{mi} \lvert \zeta_{mki} \rvert ^2 + \beta_{mk} q_{mi} \zeta_{mii} \right)
    \\+ \sum_{m=1}^M \sum_{\substack{m'=1\\m'\neq m}}^M  x_{mi}x_{m'i} q_{mk} q_{mi} q_{m'k} q_{m'i}  \zeta_{mki} \zeta_{m'ki}^*,
\end{split}
\end{equation}

\begin{equation}
\label{eq:stat_dot2}
    \E \left[\lvert \dot \gamma_{kk} \rvert ^2\right] =\sum_{m=1}^M \lvert x_{mk}\rvert^2 q_{mk} \zeta_{mkk}^2
    + \sum_{m=1}^M \sum_{\substack{m'=1\\m'\neq m}}^M  x_{mi}x_{m'i} q_{mk} q_{m'k}  \zeta_{mkk} \zeta_{m'kk}^*,
\end{equation}

\begin{equation}
    \E \left[ \lvert \bar\gamma_{kk}   \rvert^2 \right] =\sum_{m=1}^M \lvert x_{mk}\rvert^2\beta_{mk} q_{mk} \zeta_{mkk},
\end{equation}

$$\E \left[ \lvert \hat\gamma_{kk}\rvert^2 \right] = \frac{\left(\E \left[ \lvert \bar\gamma_{kk}\rvert^2 \right] \right)^2} {\E \left[ \lvert \bar\gamma_{kk}\rvert^2 \right] + \sigma^2_d} ,$$
and $$\E \left[ \lvert \tilde\gamma_{kk}\rvert^2 \right] = \frac{\E \left[ \lvert \bar\gamma_{kk}\rvert^2 \right] \sigma^2_d } {\E \left[ \lvert \bar\gamma_{kk}\rvert^2 \right] + \sigma^2_d},$$ 

 with $q_{mk} \triangleq \Pr\{\alpha_{mk}=1\}$, and $\zeta_{mki} \triangleq \mathbf{\dot h}_{mk} ^T \mathbf{\dot h}_{mi}^*$.

\end{theorem}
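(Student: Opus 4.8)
The plan is to follow exactly the template used for Theorem~\ref{thm:est}, since the received-signal decomposition in \eqref{eq:stat2_recd_sig} has an identical structure: a useful coefficient $\dot\gamma_{kk}+\hat\gamma_{kk}$ known to the receiver, a self-interference term $\tilde\gamma_{kk}s_k$ arising from the downlink channel-estimation error, the inter-stream interference $\sum_{i\neq k}\gamma_{ki}s_i$, and the AWGN. First I would argue that, after the limited downlink training of Section~\ref{sec:IID}, the $k$th UE knows the effective gain $\dot\gamma_{kk}+\hat\gamma_{kk}$ (the slow part $\dot\gamma_{kk}$ through long-term feedback and $\hat\gamma_{kk}$ as the downlink LMMSE estimate). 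Treating $\tilde\gamma_{kk}s_k$ and the inter-stream interference as uncorrelated additive noise, and using $\E[s_is_j^*]=\delta[i-j]$, the instantaneous achievable rate is $R_k=\E\left[\log\left(1+\frac{|\dot\gamma_{kk}+\hat\gamma_{kk}|^2}{\E[|\tilde\gamma_{kk}|^2]+\sum_{i\neq k}\E[|\gamma_{ki}|^2]+\sigma_o^2}\right)\right]$.

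Next I would apply Jensen's inequality to move the expectation inside the logarithm, which immediately yields the claimed upper bound once I establish that the mean useful power splits additively as $\E[|\dot\gamma_{kk}+\hat\gamma_{kk}|^2]=\E[|\dot\gamma_{kk}|^2]+\E[|\hat\gamma_{kk}|^2]$. The cross term vanishes because $\hat\gamma_{kk}$ is a scaled copy of $\bar y_{kk}=\bar\gamma_{kk}+\sigma_d w_k'$, while $\dot\gamma_{kk}$ depends only on $\{\alpha_{mk}\}$ and the deterministic LoS gains; since $\bar\gamma_{kk}$ is linear in the zero-mean $\bar{\mathbf{h}}_{mk}$, which is independent of $\{\alpha_{mk}\}$ and of $w_k'$, the expectation $\E[\dot\gamma_{kk}\hat\gamma_{kk}^*]$ is zero.

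For the moments themselves, the key observation is that the precoder here, $\mathbf{P}_m=\dot{\mathbf{H}}_m^*\,\text{diag}(\boldsymbol{\alpha}_m)\mathbf{X}_m$, is precisely the one used in the no-pilot statistical beamforming case of Section~\ref{subsec:statistical}. Consequently $\E[|\gamma_{ki}|^2]$, $\E[|\dot\gamma_{kk}|^2]$, and $\E[|\bar\gamma_{kk}|^2]$ are inherited verbatim from Theorem~\ref{thm:stat}, which is exactly why the expressions quoted in the statement coincide with those of that theorem. The only genuinely new step is to split the self-interference power $\E[|\bar\gamma_{kk}|^2]$ into an estimated part and a residual part; invoking the downlink LMMSE orthogonality relation of Section~\ref{sec:IID}, namely $\bar\gamma_{kk}=\hat\gamma_{kk}+\tilde\gamma_{kk}$ with $\E[\hat\gamma_{kk}\tilde\gamma_{kk}^*]=0$, gives $\E[|\hat\gamma_{kk}|^2]=(\E[|\bar\gamma_{kk}|^2])^2/(\E[|\bar\gamma_{kk}|^2]+\sigma_d^2)$ and $\E[|\tilde\gamma_{kk}|^2]=\E[|\bar\gamma_{kk}|^2]\sigma_d^2/(\E[|\bar\gamma_{kk}|^2]+\sigma_d^2)$.

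I expect the main obstacle to be one of careful justification rather than of computation: establishing that $\tilde\gamma_{kk}$ may legitimately be absorbed into the additive noise and that $\dot\gamma_{kk}$, $\hat\gamma_{kk}$, and $\tilde\gamma_{kk}$ are mutually uncorrelated, so that all the power terms add without residual cross terms. This rests entirely on the LMMSE orthogonality principle together with the independence and zero-mean property of the NLoS fading $\bar{\mathbf{h}}_{mk}$. Once that is in place, the result follows as essentially a corollary of Theorem~\ref{thm:stat} combined with the downlink estimation of Section~\ref{sec:IID}, so no second-moment evaluation beyond the $\hat\gamma_{kk}/\tilde\gamma_{kk}$ split is required.
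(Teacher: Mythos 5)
Your proposal is correct and follows essentially the same route as the paper: treat the self-interference and inter-stream terms as noise, write the rate as an expectation of a log-SINR, apply Jensen's inequality, inherit the second-order statistics of $\gamma_{ki}$, $\dot\gamma_{kk}$, and $\bar\gamma_{kk}$ from the statistical-beamforming precoder (Appendix~\ref{app:stat_stat}, shared with Theorem~\ref{thm:stat}), and split $\E[\lvert\bar\gamma_{kk}\rvert^2]$ via the downlink LMMSE orthogonality of Section~\ref{sec:IID}. Your explicit verification that $\E[\dot\gamma_{kk}\hat\gamma_{kk}^*]=0$ is a detail the paper leaves implicit (it only remarks that $\dot\gamma_{kk}$ and $\bar\gamma_{kk}$ are uncorrelated), but it is the same argument, not a different one.
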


\begin{proof}

Treating the interference in~\eqref{eq:stat2_recd_sig} as noise, and under the assumption that $\E \left[s_i s_j^*\right] = \delta\left[i-j\right]$,we can write,

\begin{equation}
\label{eq:Rk_stat_2}
    R_k = \E \left[ \log\left(1+ \frac{\lvert \dot\gamma_{kk} + \hat\gamma_{kk}\rvert^2}{  \E \left[ \lvert \tilde\gamma_{kk}   \rvert^2 \right] + \sum^K_{ \substack{i=1 \\ i\neq k}} \E \left[ \lvert\gamma_{ki} \rvert ^2 \right] + \sigma^2_o  }\right)\right].
\end{equation}
Invoking Jensen's inequality, we can upper bound~\eqref{eq:Rk_stat_2} as,

\begin{equation}
    R_k \leq  \log\left( 1+ \frac{\E \left[\lvert \dot\gamma_{kk}\rvert^2\right] + \E \left[\lvert \hat\gamma_{kk}\rvert^2\right]}{ \E \left[ \lvert \tilde\gamma_{kk}   \rvert^2 \right] +\sum^K_{ \substack{i=1 \\ i\neq k}} \E \left[\lvert\gamma_{ki} \rvert ^2\right] + \sigma^2_o  }\right).
\end{equation}

The statistics for the effective direct and interfering downlink channels have been derived in Appendix~\ref{app:stat_stat}. 
\end{proof}

We next validate our derived results via numerical simulations. 
\graphicspath{ {./images/} }
\section{Numerical Results}
\label{sec:results}

In this section we present simulation and numerical results to validate the derived theoretical bounds, and to analyze the downlink performance of the CF-mMIMO system
under LoS/ NLoS channels. Unless stated otherwise, the simulation parameters used for these experiments are listed in Table~\ref{tab:sims}.
\begin{table}[t!]	

	\caption{Simulation Parameters}
	\centering
	\begin{tabular}{|c|c|c|c|}
		\hline 
		\bf{Parameter} & \bf{Value} & \bf{Parameter} & \bf{Value}\\
		\hline 
		Reference distance ($d_0$) & 1~\si{\meter} & Network Area & $1$ \si{\square\kilo\meter}=$10^6$ \si{\square\meter}
		\\ Number of Users~($K$) & 64 & Number of APs (M) & 1024
		\\ UE antenna height & 1.5~\si{\meter}& AP Height & 10~\si{\meter}
		\\ Carrier frequency ($f_c$) & 3.5 GHz & $\eta$& 0.5 
		\\ $\mu$& $300/$\si{\square\kilo\meter} & Average building height ($\rho$) & 20~\si{\meter}
		\\$N$ & 1 & Normalised Signal Noise Power ($\sigma_o$) & 20dB
	 \\ \hline 
			\end{tabular}
		\label{tab:sims}
\end{table} 
\begin{figure}[t!]
	\centering
	\includegraphics[width=0.7\textwidth]{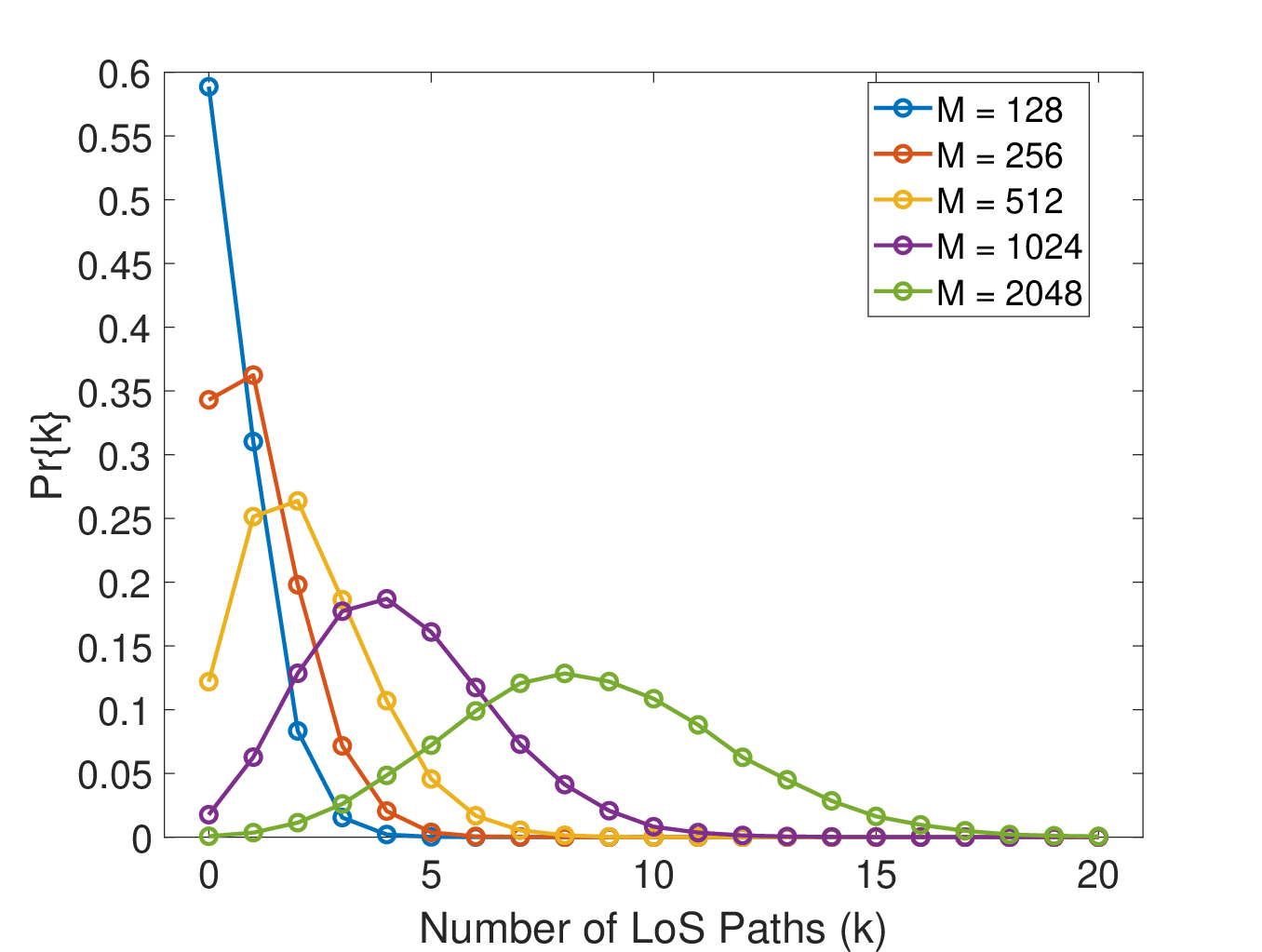}
	\caption{PMF of number of LoS channels per UE for varying AP densities}
	\label{fig:p}
\end{figure}
 
In Fig.~\ref{fig:p}, we plot the probability mass function~(PMF) for
the number of LoS channels for an arbitrarily placed UE. We observe
that increasing the number of APs from 128 to 1024 dramatically
reduces the probability of the absence of an LoS channel from ~60\% to~2\%. This, coupled with the deterministic and high gain nature of LoS
channels motivates us to deploy more single antenna APs instead of a
smaller number of APs with multiple antennas. Therefore, for AP
densities greater than $10^3$ APs
/\si{\square\kilo\meter}, transmission schemes that rely heavily on LoS communication
can be deployed effectively without significantly affecting the 5th
percentile performance.
 
 \begin{figure}[t]
  \centering
    \includegraphics[width=0.7\textwidth]{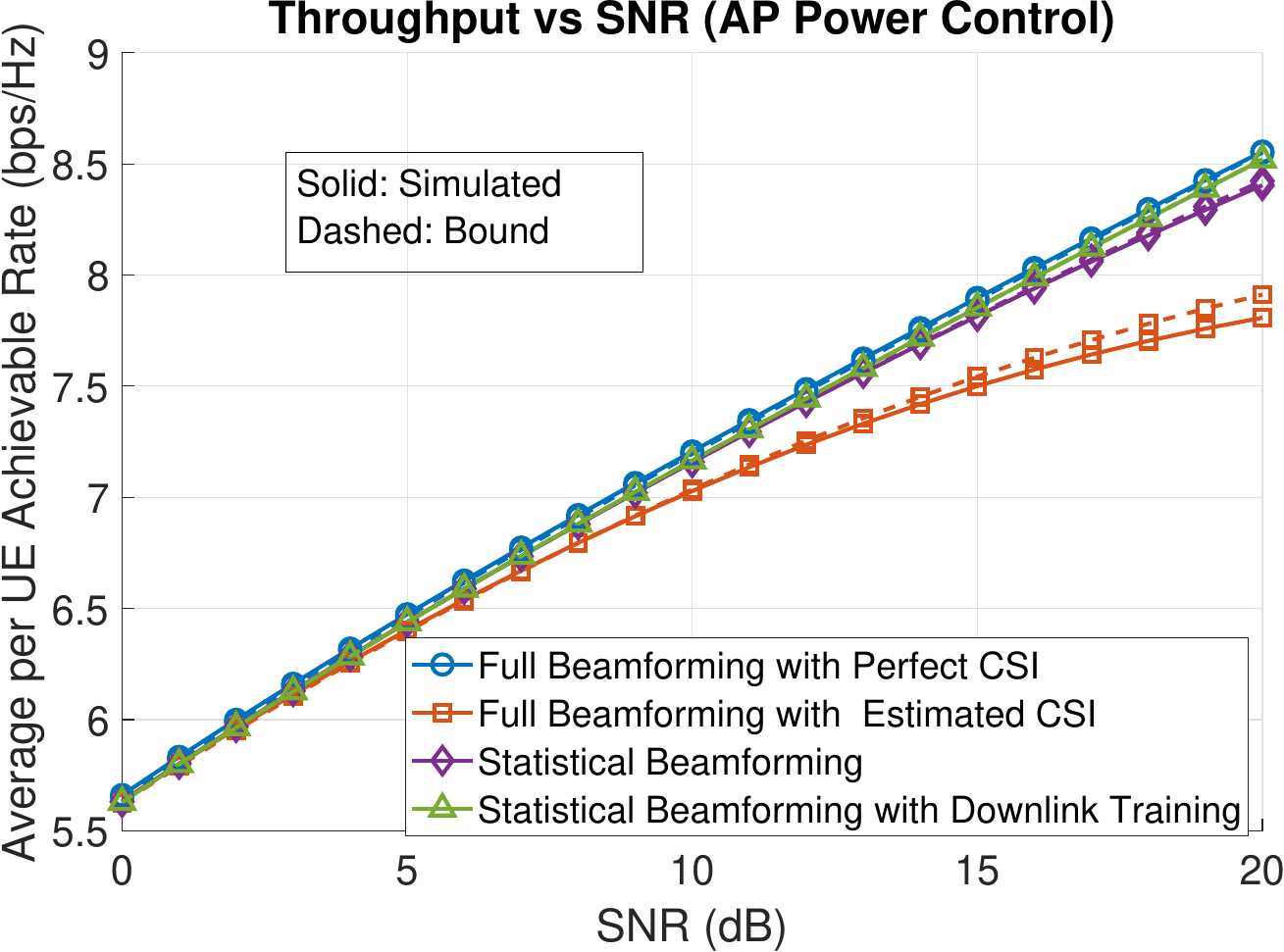}
     \caption{Average per user achievable rate as a function of the data SNR for different beamforming schemes with a fixed amount of power being transmitted by each AP.}
    \label{fig:t1a}
 \end{figure}   
  \begin{figure}
  	\centering
    \includegraphics[width=0.7\textwidth]{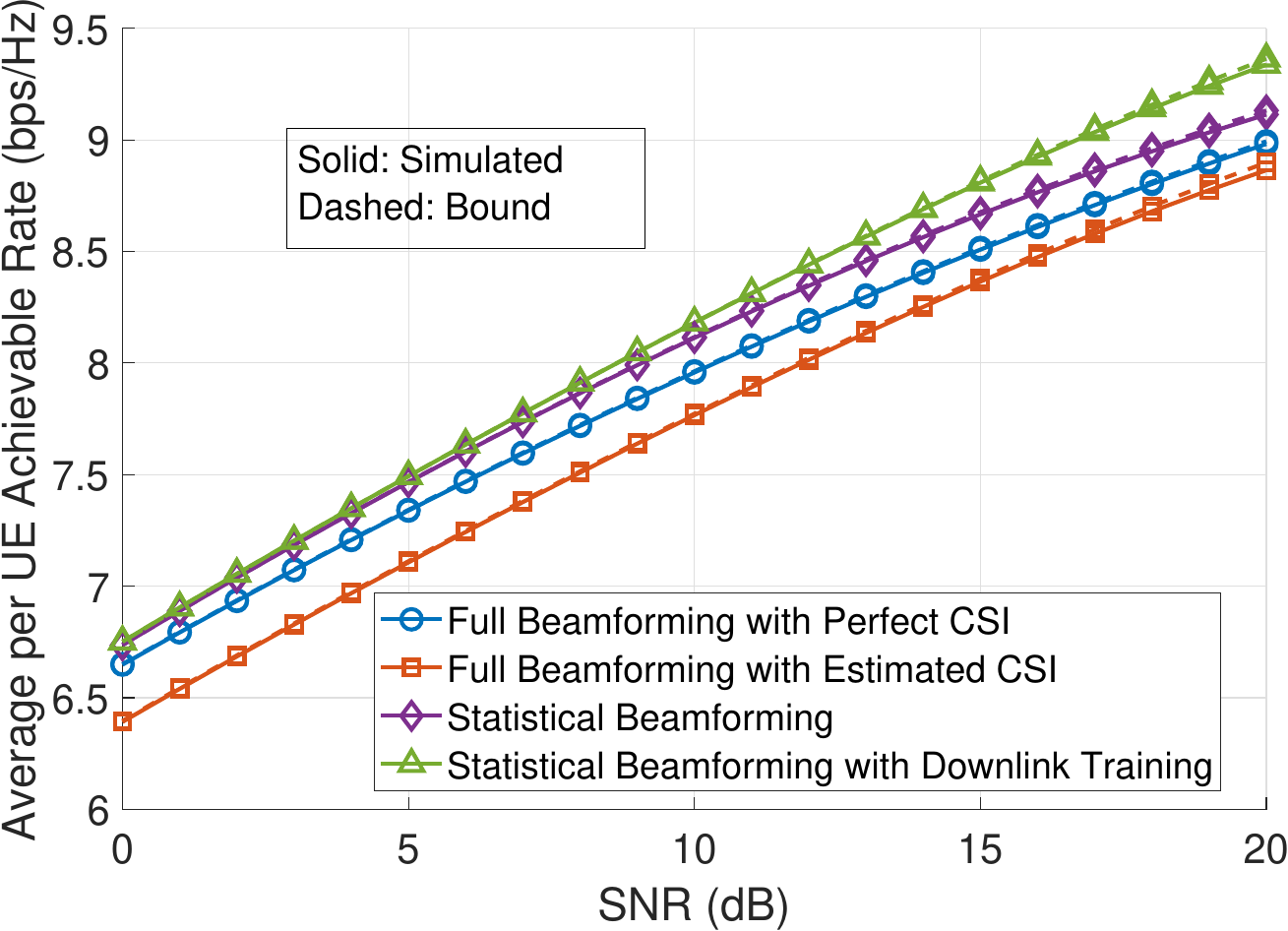}
    \caption{Average per user achievable rate as a function of the data SNR for different beamforming schemes with a fixed amount of power being transmitted to each UE.}
    \label{fig:t1b}
 \end{figure}
 
In Fig.~\ref{fig:t1a}, we plot the average achievable rate per user
against the normalized transmitted data power  while constraining the
power transmitted by each AP. We note that the performance of
statistical beamforming with downlink training matches closely with
the performance achieved under the availability of accurate CSI. This
is due to the fact that, under statistical beamforming, the APs
allocate no power to the NLoS channels between themseleves and the
UEs, reducing the inter-stream interference, and improving the
SINR. It is important to note that, since there is a 76\% probability
that a certain AP has no LoS channels to UEs, it will remain silent,
thus making statistical beamforming more power efficient. We keep the maximum power of the system constant across power allocation schemes, and the respective maximum power constants are set as, $\frac{\mathcal{E}_{u,max}}{\mathcal{E}_{a,max}}= \frac{M}{K}$.

The average user throughput as a function of the normalized
transmitted data SNR while constraining the power transmitted to each
UE is plotted in Fig.~\ref{fig:t1b}. We observe that, in this case,
statistical beamforming, both with and without downlink training,
outperforms full beamforming. This is due to the fact that in the
former case, the power is allocated only to the LoS channels rather
than all the possible channels to a user, most of which are NLoS. We
also observe that all the schemes other than statistical beamforming
with downlink training seem to saturate at high SNRs. In case of full
beamforming, this is explained by high levels of inter-stream
interference due to all $1024 \times 64$ channels being in use. On the other
hand, pure statistical beamforming suffers from strong
self-interference since the scheme neglects the fast-fading channel
components. 

\begin{figure}
  \centering
    \includegraphics[width=0.7\textwidth]{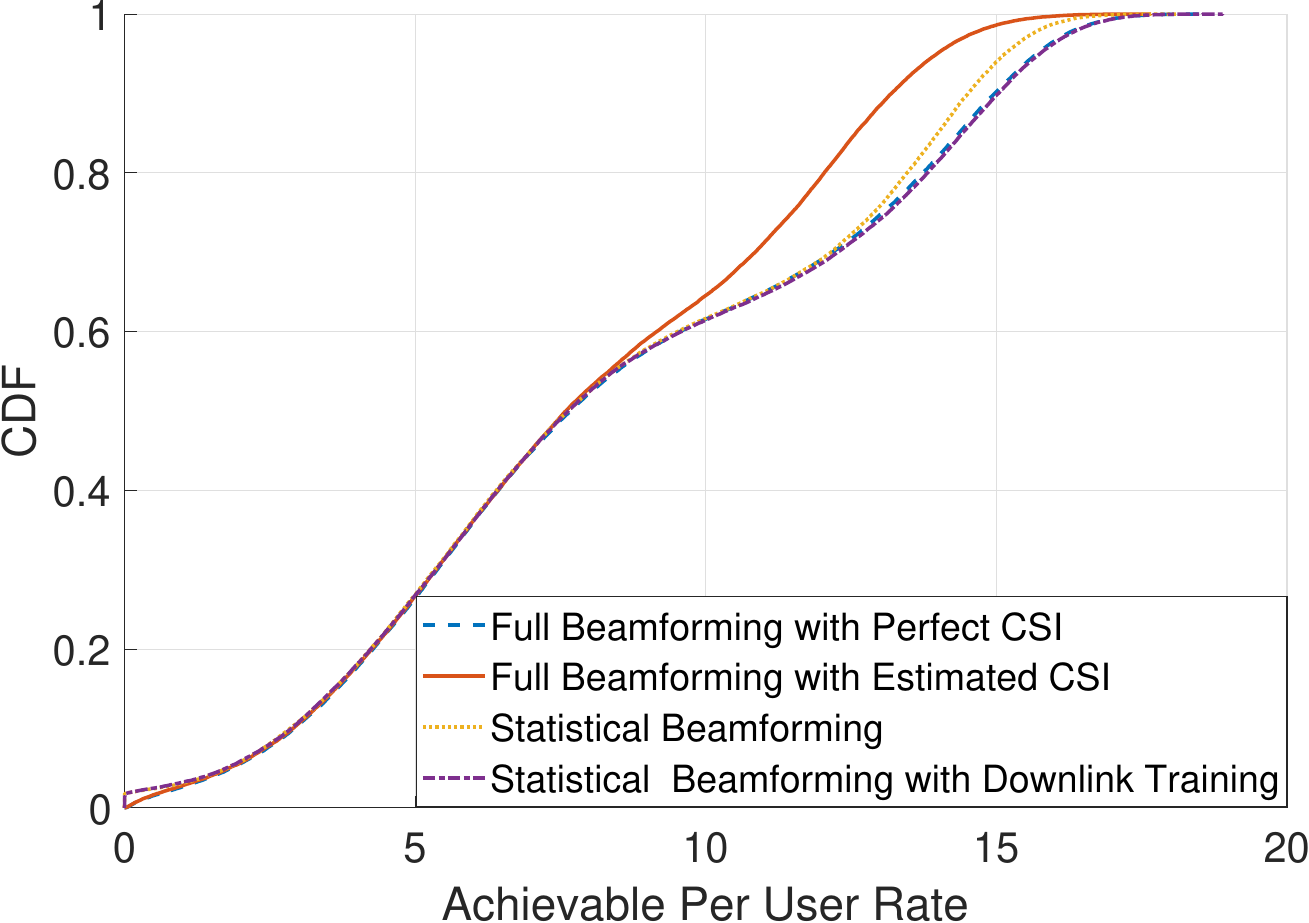}
    \caption{Simulated CDF of the per user spectral efficiency for different beamforming schemes and a fixed amount of power being transmitted by each AP.}
    \label{fig:c1a}
  \end{figure}
  \begin{figure}
  	\centering
    \includegraphics[width=0.7\textwidth]{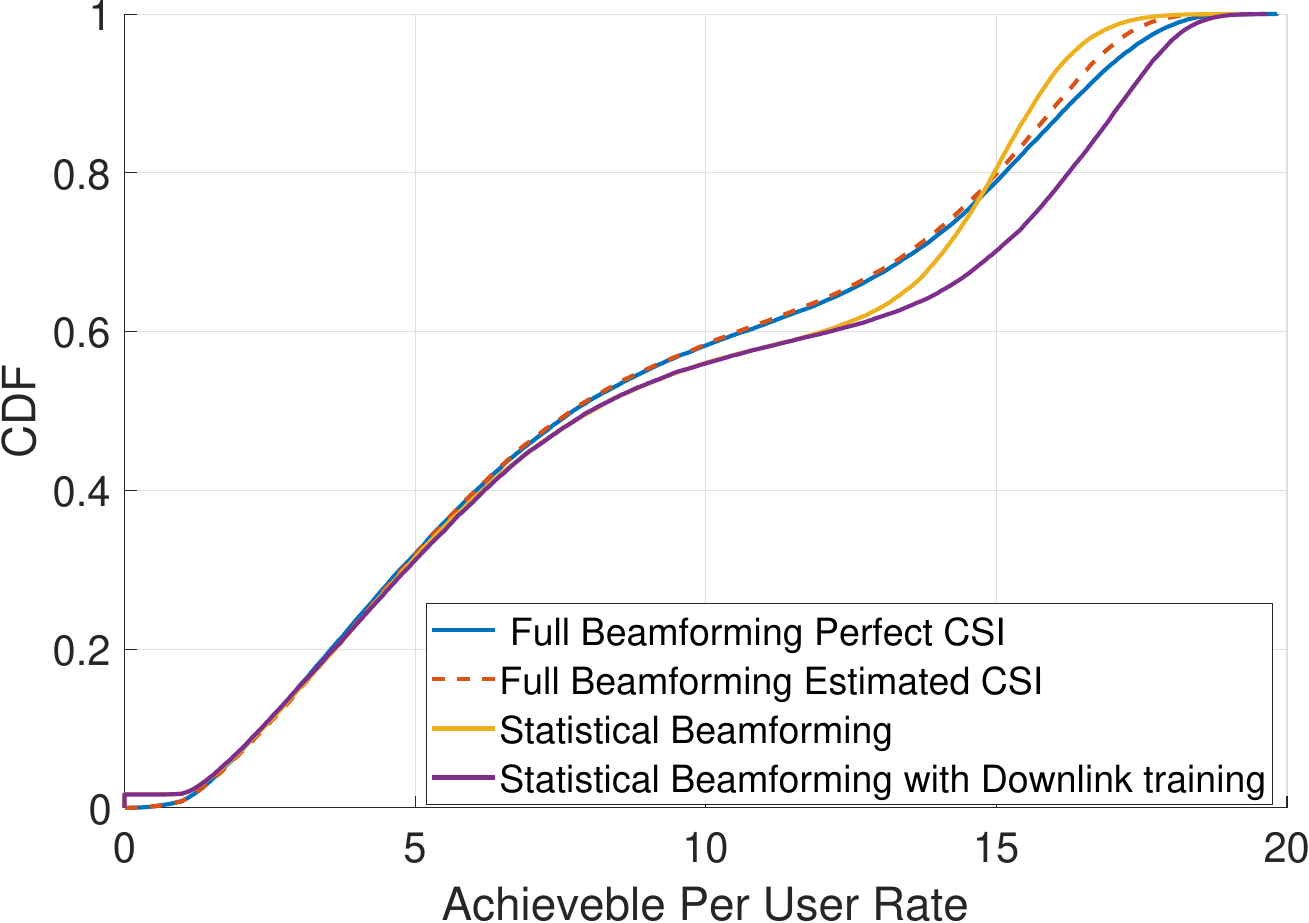}
    \caption{Simulated CDF of the per user spectral efficiency for different beamforming schemes and a fixed amount of power being transmitted to each UE.}
    \label{fig:c1b}
\end{figure}

In addition to the average throughputs, it is also necessary to
analyse the distribution functions of the user throughputs for all the
schemes, as done in in Figs.~\ref{fig:c1a} and~\ref{fig:c1b}. We
observe that, as expected, the statistical beamforming schemes have an
abrupt jump of about 2\% at rate 0. This corresponds to the fraction of
users that do not have an available LoS channel as observed in Fig.~\ref{fig:p}. It interesting to note that for both the power
control schemes the performance of users from the 5th percentile to
about 50th percentile is identical.

In the case where the power being transmitted by each AP remains
fixed, the performance of the high throughput users is commensurate
with the observations from the achievable rate vs. SNR plots
(Figs.~\ref{fig:t1a} and~\ref{fig:t1b}). It is interesting to note how
closely the CDFs of the statistical schemes track the accurate CSI
case. Apart from the fairly low performance loss at high throughput,
the only major limitation is the abrupt 2\% jump observed at 0 due to the
unavoidable outage. This outage is because of the scheme's reliance on high AP
densities.

In case we limit the power transmitted to each UE, the statistical
schemes work at full power capacity, unlike in the case where we limit
the power transmitted by each AP. Constraining the power transmitted to each UE allows the CPU to divert the bulk of its power to APs located near active users. This fact, coupled with the reliance of statistical schemes on LoS transmissions makes it a superior choice for high throughput users. We also note that the self interference experienced by the pure statistical transmission at high SNRs make this scheme interference limited and saturate its performance. 
\begin{figure}
  \centering
    \includegraphics[width=0.7\textwidth]{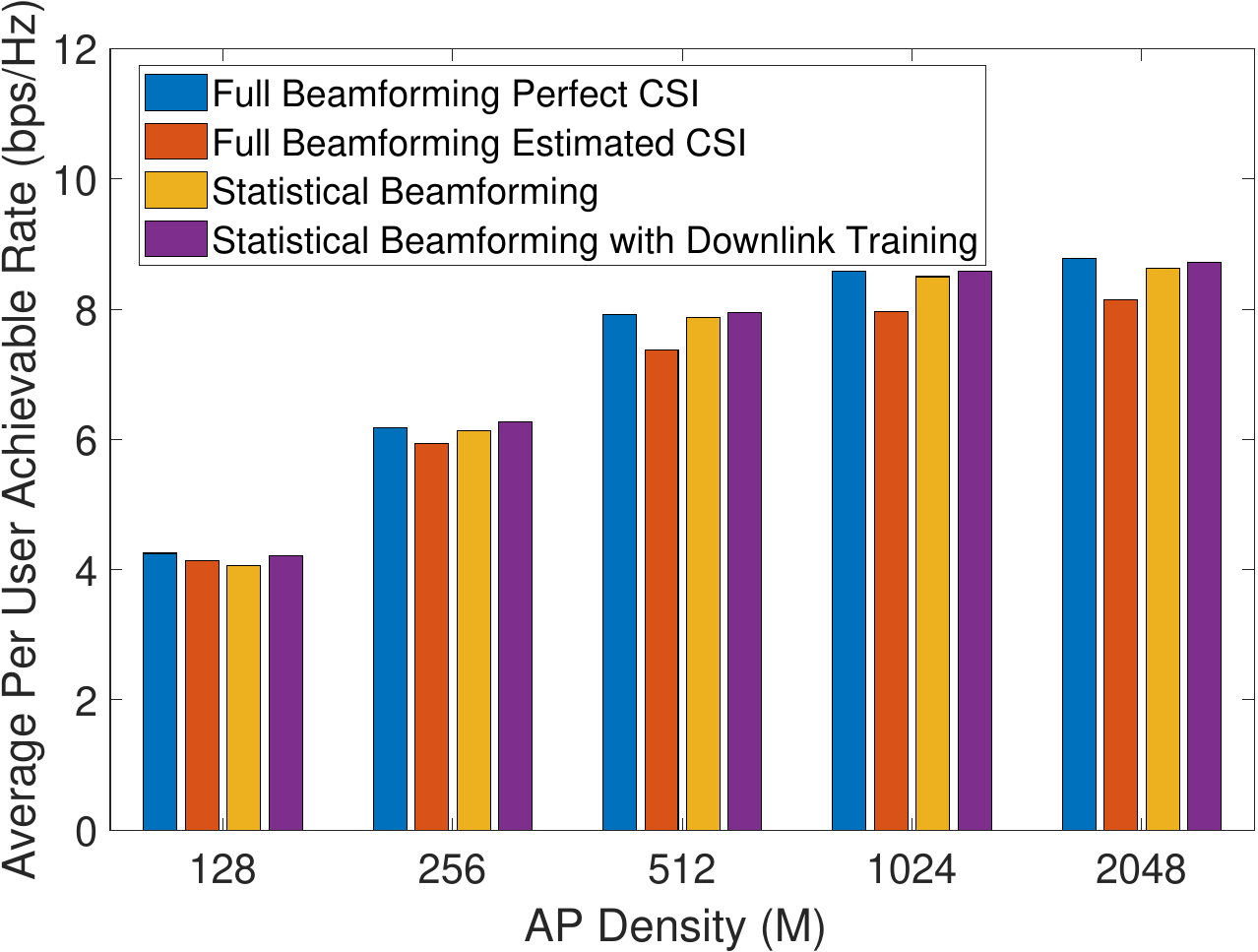}
    \caption{Average per user achievable rate as a function of the AP density for different beamforming schemes with a fixed amount of power being transmitted by each AP.}
    \label{fig:b1a}
  \end{figure}
  \begin{figure}[t!]
 \centering    
\includegraphics[width=0.7\textwidth]{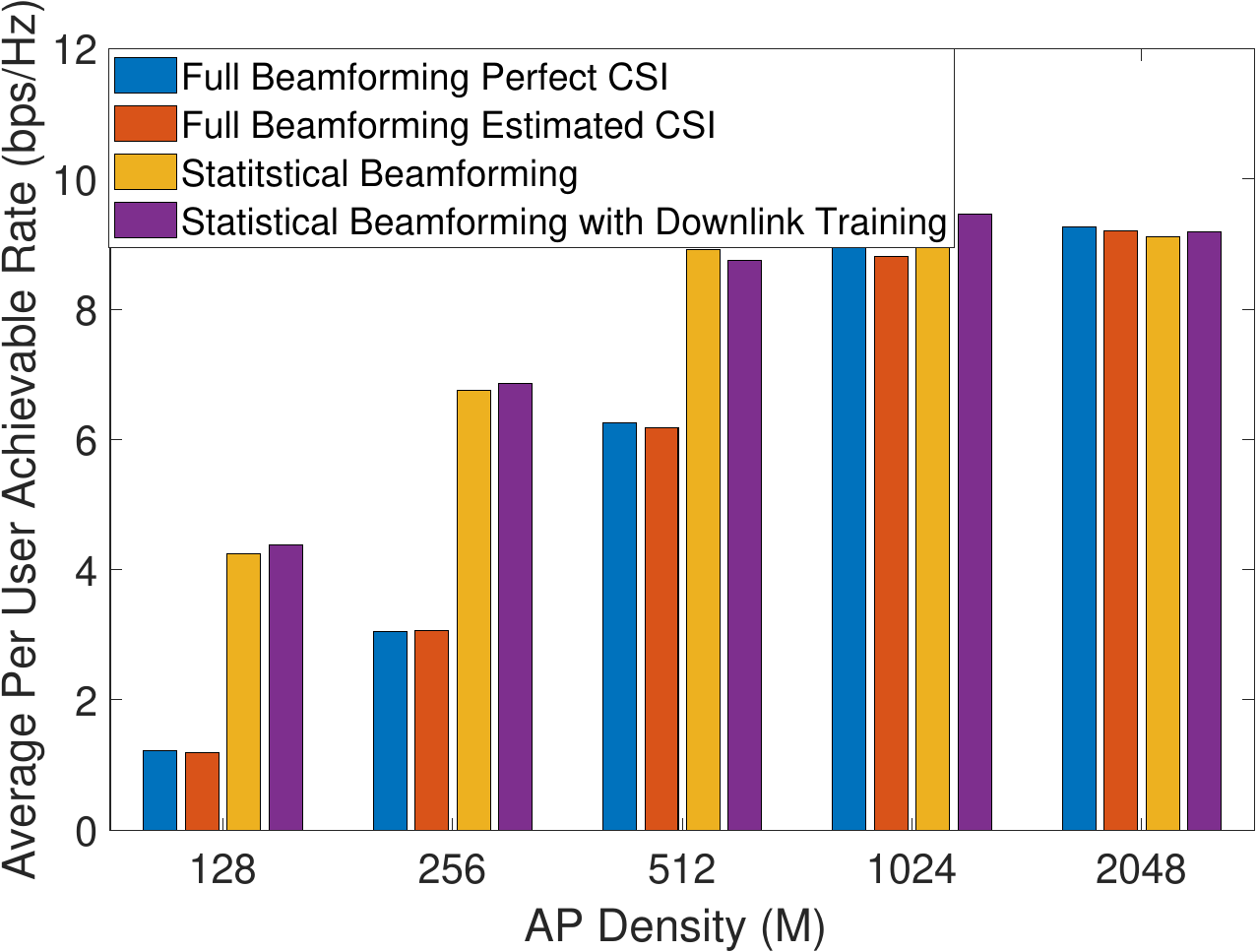}
    \caption{Average per user achievable rate as a function of the AP density for different beamforming schemes with a fixed amount of power being transmitted to each UE.}
    \label{fig:b1b}
 \end{figure}

 In Figs.~\ref{fig:b1a} and~\ref{fig:b1b} we plot the average
 throughput of each precoding scheme with AP density varying from 128
 AP/\si{\square\kilo\meter} to 2048 AP/\si{\square\kilo\meter} for both the power control schemes. At
 first, it is apparent that the performance gains from increasing the
 AP density start saturating at about $10^3$~AP/\si{\square\kilo\meter}. This can be attributed to the increased
 inter-stream interference. In Fig.~\ref{fig:b1a}, where the case of
 constraining the power transmitted by each AP is considered, the
 observations are mostly in line with Fig.~\ref{fig:t1a}. Here,
 accurate CSI precoding performs the best, followed by statistical
 beamforming with downlink training, pure statistical beamforming and
 finally, estimated CSI precoding. The only exception is at $M=128$,
 where pure statistical beamforming performs worse than estimated CSI
 precoding. This is to be expected, since only about 40\% of UEs have
 an LoS channel. We again note that statistical beamforming is
 operating at only 24\% of the total power available, as opposed to
 the full beamforming schemes that use all the available
 power. For the case where power transmitted to each UE is constrained
 (refer Fig.~\ref{fig:b1b}), since the statistical schemes utilize
 nearly all the power available, their performance far exceeds that
 obtained with full beamforming, especially for low AP
 densities. It must be noted that, since a relatively high proportion
 of UEs have no LoS channels at low AP densities, we expect a
 trade-off in their CDFs. This overwhelming gain in throughput at low
 AP densities, as mentioned earlier, is because all the energy of the
 system is being diverted to APs with LoS channels near active
 users. However, this phenomenon causes a reduction in the throughput
 of the statistical beamforming schemes at an AP density of
 $M=2048$.

It must be noted that the statistical beamforming scheme without downlink training has the unique advantage of not allocating any resources for uplink/ downlink training, and using the entire frame for data transmission. 

\section{Conclusion}
In this paper, we have discussed the downlink performance of CF-mMIMO
systems under probabilistic LoS/ NLoS channels. We have evaluated the
performance of these systems under full, as well as statistical
beamforming schemes. At sufficiently high AP densities, it was
observed that statistical beamforming schemes provide superior
throughput with little or no loss in coverage and at much higher
energy efficiency. Furthermore, for an AP density of 1024
AP/\si{\square\kilo\meter} the achievable rates of users between the
fifth and the fiftieth percentile is nearly identical, with only 2\% of the
users in operating in the statistical schemes experiencing zero
throughput, and most of the variation being among users who are among the top half in terms of achievable rates. We also show that the
performance and coverage gains obtained from increasing the AP density
saturates at about 1000 AP/\si{\square\kilo\meter}. 
This shows that LoS reliant statistical beamforming schemes can lead
to performance that is commensurate or even superior to full
beamforming. 
Further research would focus on optimal power control schemes for
CF-mMIMO systems under probabilistic LoS/NLoS channel conditions as
well as on precoding schemes that are intended to mitigate
inter-stream interference such as regularised zero
forcing.


\begin{appendix}
\subsection{Channel Statistics with accurate CSI}
\label{app:stat_perfect}
We know that, $\mathbf{ h}_{mk} = \alpha_{mk} \mathbf{\dot h}_{mk} + \sqrt{\beta_{mk}} \mathbf{\bar h}_{mk}$, where $\alpha_{mk}$ is a Bernoulli RV with $\Pr\{\alpha_{mk}=1\}=q_{mk}$. Defining $\zeta_{mki} = \mathbf{\dot h}_{mk} ^T \mathbf{\dot h}_{mi}^*$, we can write,






\begin{equation}
    \E \left[\gamma_{ki}\right] = \sum_{m=1} ^M x_{mi} q_{mk} q_{mi} \zeta_{mki} + x_{mk} N \beta_{mk} \delta[k-i].
\end{equation}
Similarly, 
\begin{equation}
\E \left[\lvert\gamma_{kk}\rvert ^2\right] = \sum _{m=1} ^M |x_{mk}|^2 \E \left[|\mathbf{h}_{mk}^T \mathbf{h}_{mk}^*|^2\right] 
\\+\sum_{m=1}^M \sum_{\substack{m'=1\\m'\neq m}}^M x_{mk}x_{m'k}  \E \left[ \mathbf{h}_{mk}^T \mathbf{h}_{mk}^*\right]\E \left[ \mathbf{h}_{m'k}^T \mathbf{h}_{m'k}^*\right]^*,
\end{equation}
where $\E \left[ \mathbf{h}_{mk}^T \mathbf{h}_{mk}^*\right] = q_{mk} \zeta_{mkk} + \beta_{mk} N  $ and,
\begin{align*}
   \E \left[|\mathbf{h}_{mk}^T \mathbf{h}_{mk}^*|^2\right] &=
   \E \left[ \lvert \alpha_{mk} \mathbf{\dot h}_{mk}^T  \mathbf{\dot h}_{mk}^* + \beta_{mk} \mathbf{\bar h}_{mk}^T  \mathbf{\bar h}_{mk}^* + \alpha_{mk}  \sqrt{\beta_{mk}} (\mathbf{\bar h}_{mk}^T  \mathbf{\dot h}_{mk}^* + \mathbf{\dot h}_{mk}^T  \mathbf{\bar h}_{mk}^*)  \rvert ^2 \right]
   \\
  &= \E \left[ \alpha_{mk} \lvert  \mathbf{\dot h}_{mk}^T \mathbf{\dot h}_{mk}^* \rvert^2 + \beta_{mk}^2  \lvert  \mathbf{\bar h}_{mk}^T \mathbf{\bar h}_{mk}^*  \rvert^2  + 2 \alpha_{mk} \beta_{mk} \lvert \mathbf{\dot h}_{mk}^T \mathbf{\bar h}_{mk}^* \rvert^2 
  \right. \\ & \left. 
  + 2 \Re (  \alpha_{mk} \beta_{mk} \mathbf{\dot h}_{mk}^T  \mathbf{\dot h}_{mk}^* \mathbf{\bar h}_{mk}^T  \mathbf{\bar h}_{mk}^* + \alpha_{mk} \sqrt{\beta_{mk}} \mathbf{\dot h}_{mk}^T  \mathbf{\dot h}_{mk}^* \mathbf{\dot h}_{mk}^T  \mathbf{\bar h}_{mk}^* 
   \right. \\ & \left.
  + \alpha_{mk} \sqrt{\beta_{mk}} \mathbf{\dot h}_{mk}^T  \mathbf{\dot h}_{mk}^* \mathbf{\dot h}_{mk}^H  \mathbf{\bar h}_{mk} 
  + \alpha_{mk}\beta_{mk}^{\frac{3}{2}} \mathbf{\bar h}_{mk}^T  \mathbf{\bar h}_{mk}^* \mathbf{\bar h}_{mk}^H \mathbf{\dot h}_{mk}
   \right.   \\ & \left.
  + \alpha_{mk}\beta_{mk}^{\frac{3}{2}} \mathbf{\bar h}_{mk}^T  \mathbf{\bar h}_{mk}^* \mathbf{\bar h}_{mk}^T \mathbf{\dot h}_{mk}^* + \alpha_{mk} \beta_{mk} (\mathbf{\bar h}_{mk}^T  \mathbf{\dot h}_{mk}^*)^2   )\right]
  \\ 
  &= q_{mk} \zeta_{mkk} ^2 + N (N+1)\beta_{mk}^2 + 2 (N+1) q_{mk} \beta_{mk} \zeta_{mkk}.
\end{align*}
This simplifies to the expression in~\eqref{eq:stat_acc_kk}.

Similarly for $\gamma_{ki}$ and $k\neq i$,
\begin{equation}
\E \left[\lvert\gamma_{ki}\rvert ^2\right] = \sum _{m=1} ^M |x_{mi}|^2 \E \left[|\mathbf{h}_{mk}^T \mathbf{h}_{mi}^*|^2\right] 
+\sum_{m=1}^M \sum_{\substack{m'=1\\m'\neq m}}^M x_{mi}x_{m'i} \E \left[ \mathbf{h}_{mk}^T\right] \E \left[\mathbf{h}_{mi}^*\right]\E \left[ \mathbf{h}_{m'k}^T \right]^* \E \left[\mathbf{h}_{m'i}^*\right]^*,
\end{equation}
where, $\E \left[\mathbf{h}_{mk}\right]= q_{mk} \mathbf{\dot h}_{mk}$ and,
\begin{align*}
\E \left[|\mathbf{h}_{mk}^T \mathbf{h}_{mi}^*|^2\right] &= \E \left[ \lvert \alpha_{mk}\alpha_{mi} \mathbf{\dot h}_{mk}^T \mathbf{\dot h}_{mi}^*  + \sqrt{\beta_{mk} \beta_{mi}} \mathbf{\bar h}_{mk}^T \mathbf{\bar h}_{mi}^* 
+ \alpha_{mk}\sqrt{\beta_{mi}} \mathbf{\dot h}_{mk}^T \mathbf{\bar h}_{mi}^* \right. \\& \left.
+ \sqrt{\beta_{mk}} \alpha_{mi} \mathbf{\bar h}_{mk}^T \mathbf{\dot h}_{mi}^* \rvert ^2\right] 
\\
&= \E \left[  \alpha_{mk}\alpha_{mi} \lvert \mathbf{\dot h}_{mk}^T \mathbf{\dot h}_{mi}^*\rvert^2  + \beta_{mk} \beta_{mi} \lvert \mathbf{\bar h}_{mk}^T \mathbf{\bar h}_{mi}^*\rvert^2 + \alpha_{mk}\beta_{mi} \lvert\mathbf{\dot h}_{mk}^T \mathbf{\bar h}_{mi}^* \rvert^2
\right.\\ &\left.
+ \beta_{mk} \alpha_{mi} \lvert \mathbf{\bar h}_{mk}^T \mathbf{\dot h}_{mi}^* \rvert ^2 
+2 \Re \Big( \alpha_{mk}\alpha_{mi} \sqrt{\beta_{mk} \beta_{mi}} \mathbf{\dot h}_{mk}^T \mathbf{\dot h}_{mi}^* \mathbf{\bar h}_{mk}^H \mathbf{\bar h}_{mi} 
\right.\\ &\left.
+ \alpha_{mk}\alpha_{mi} \sqrt{\beta_{mi}} \mathbf{\dot h}_{mk}^T \mathbf{\dot h}_{mi}^* \mathbf{\dot h}_{mk}^H \mathbf{\bar h}_{mi} 
+ \alpha_{mk}\alpha_{mi} \sqrt{\beta_{mk}} \mathbf{\dot h}_{mk}^T \mathbf{\dot h}_{mi}^* \mathbf{\dot h}_{mi}^T \mathbf{\bar h}_{mk}^*
\right.\\ &\left. 
+\alpha_{mk} \sqrt{\beta_{mk}} \beta_{mi}  \mathbf{\bar h}_{mk}^T \mathbf{\bar h}_{mi}^* \mathbf{\bar h}_{mi}^T \mathbf{\dot h}_{mk}^* 
+  \alpha_{mk} \beta_{mk} \sqrt{\beta_{mi}}  \mathbf{\bar h}_{mk}^T \mathbf{\bar h}_{mi}^* \mathbf{\bar h}_{mk}^H \mathbf{\dot h}_{mi} 
\right.\\ &\left. 
+ \alpha_{mk}\alpha_{mi} \sqrt{\beta_{mk} \beta_{mi}} \mathbf{\dot h}_{mk}^T  \mathbf{\bar h}_{mi}^* \mathbf{\bar h}_{mk}^H \mathbf{\dot h}_{mi} \Big) \right]
\\
&= q_{mk}q_{mi}  \lvert \zeta_{mki} \rvert ^2  + \beta_{mk} \beta_{mi} N + q_{mk} \beta_{mi} \zeta_{mkk}
+ \beta_{mk} q_{mi} \zeta_{mii} .
\end{align*}
This simplifies to the expression in~\eqref{eq:stat_acc_ki}.
\subsection{Channel Statistics with Estimated CSI at APs and UEs}
\label{app:stat_est}
\begin{align}
    \E \left[\gamma_{kk}\right] =& \sum_{m=1} ^M x_{mi} \Big( \E \left[ \alpha_{mk} \right] \mathbf{\dot h}_{mk}^T \mathbf{\dot h}_{mk}^* + \sqrt{\beta_{mk}} \E \left[ \alpha_{mk}\right] \E\left[  \mathbf{\bar h}_{mk}^T \right] \mathbf{\dot h}_{mk}^* + \E[ \alpha_{mk}] \sqrt{\beta_{mk}} \mathbf{\dot h}_{mk}^T \E [\mathbf{\hat h}_{mk}^*] \\& \nonumber
+ \beta_{mk}\E [  \mathbf{\bar h}_{mk}^T\mathbf{\hat h}_{mk}^*] \Big) = \sum_{m=1} ^M x_{mk} \Big( q_{mk} \zeta_{mkk} + \frac{N \beta_{mk}}{\beta_{mk}+ \sigma_u^2}\Big)
\end{align}
\begin{align}
    \E \left[\gamma_{ki}\right]  =& \sum_{m=1} ^M x_{mk} \Big( \E \left[ \alpha_{mk} \alpha_{mi} \right] \mathbf{\dot h}_{mk}^T \mathbf{\dot h}_{mi}^* + \sqrt{\beta_{mk}} \E \left[ \alpha_{mi}\right] \E\left[  \mathbf{\bar h}_{mk}^T \right] \mathbf{\dot h}_{mi}^* + \E[\alpha_{mk}]\sqrt{\beta_{mi}} \mathbf{\dot h}_{mk}^T \E [\mathbf{\hat h}_{mi}^*] \\& \nonumber
+ \sqrt{\beta_{mk} \beta}_{mi}\E [  \mathbf{\bar h}_{mk}^T\mathbf{\hat h}_{mi}^*] \Big) =\sum_{m=1} ^M x_{mi} q_{mk} q_{mi}\zeta_{mki}
\end{align}
\begin{equation}
    \E \left[\dot\gamma_{kk}\right] = \sum_{m=1} ^M x_{mk} \E \left[ \alpha_{mk} \right] \mathbf{\dot h}_{mk}^T \mathbf{\dot h}_{mk}^* =\sum_{m=1} ^M x_{mk} q_{mk} \zeta_{mkk}
\end{equation}
\begin{equation}
     \E \left[\bar\gamma_{kk}\right] =  \E \left[\gamma_{kk}\right] -  \E \left[\dot\gamma_{kk}\right]  = \sum_{m=1} ^M x_{mk} N \frac{ \beta_{mk}}{\beta_{mk}+ \sigma_u^2}
\end{equation}

Similarly, 
\begin{align}
\E \left[\lvert\gamma_{kk}\rvert ^2\right] = \sum _{m=1} ^M |x_{mk}|^2 \E \left[|\mathbf{h}_{mk}^T \mathbf{\check h}^{*}_{mk}|^2\right] 
+\sum_{m=1}^M \sum_{\substack{m'=1\\m'\neq m}}^M x_{mk}x_{m'k}  \E \left[ \mathbf{h}_{mk}^T \mathbf{\check h}^{*}_{mk}\right]\E \left[ \mathbf{h}_{m'k}^T \mathbf{\check h}^{*}_{m'k}\right]^*,
\end{align}

where $\E \left[ \mathbf{h}_{mk}^T \mathbf{\check h}^{*}_{mk}\right] = q_{mk}\zeta_{mkk} + N\beta_{mk}\frac{ \beta_{mk}}{\beta_{mk}+ \sigma_u^2}   $ and,
\begin{align*}
   \E \left[|\mathbf{h}_{mk}^T \mathbf{\check h}^{*}_{mk}|^2\right] &= \E \left[ \lvert \alpha_{mk} \mathbf{\dot h}^T_{mk} \mathbf{\dot h}^*_{mk} + \beta_{mk} \mathbf{\bar h}^T_{mk} \mathbf{\hat h}^*_{mk} + \alpha_{mk} \sqrt{\beta_{mk}} (\mathbf{\dot h}^T_{mk} \mathbf{\hat h}^*_{mk} + \mathbf{\bar h}^T_{mk} \mathbf{\dot h}^*_{mk}) \rvert ^2 \right]
  \\&
   = \E \left[ \alpha_{mk} \lvert  \mathbf{\dot h}_{mk}^T \mathbf{\dot h}_{mk}^* \rvert^2 + \beta_{mk}^2  \lvert  \mathbf{\bar h}_{mk}^T \mathbf{\hat h}_{mk}^*  \rvert^2  + \alpha_{mk} \beta_{mk} \lvert \mathbf{\dot h}_{mk}^T \mathbf{\hat h}_{mk}^* \rvert^2 
   \right. \\& \left. 
   + \alpha_{mk} \beta_{mk} \lvert \mathbf{\bar h}_{mk}^T \mathbf{\dot h}_{mk}^* \rvert^2 
   + 2 \Re \Big(  \alpha_{mk} \beta_{mk} \mathbf{\dot h}_{mk}^T  \mathbf{\dot h}_{mk}^* \mathbf{\hat h}_{mk}^T  \mathbf{\bar h}_{mk}^* 
   \right. \\& \left. 
  + \alpha_{mk} \sqrt{\beta_{mk}} \mathbf{\dot h}_{mk}^T  \mathbf{\dot h}_{mk}^* \mathbf{\dot h}_{mk}^H  \mathbf{\hat h}_{mk} + \alpha_{mk} \sqrt{\beta_{mk}} \mathbf{\dot h}_{mk}^T  \mathbf{\dot h}_{mk}^* \mathbf{\dot h}_{mk}^T  \mathbf{\bar h}_{mk}^* 
  \right. \\& \left. 
  +\alpha_{mk}\beta_{mk}^{\frac{3}{2}} \mathbf{\bar h}_{mk}^T  \mathbf{\hat h}_{mk}^* \mathbf{\dot h}_{mk}^H \mathbf{\hat h}_{mk}
  + \alpha_{mk}\beta_{mk}^{\frac{3}{2}} \mathbf{\bar h}_{mk}^T  \mathbf{\hat h}_{mk}^* \mathbf{\bar h}_{mk}^H \mathbf{\dot h}_{mk} 
  \right. \\& \left. 
  + \alpha_{mk} \beta_{mk} \mathbf{\dot h}_{mk}^T  \mathbf{\hat h}_{mk}^*  \mathbf{\dot h}_{mk}^T  \mathbf{\bar h}_{mk}^* \Big)\right]
  \\& = q_{mk} \zeta_{mkk}^2 + N(N+1)\beta_{mk}^2 \left(\frac{\beta_{mk}}{\beta_{mk}+\sigma_u^2}\right)^2
  \\&
  +N \beta_{mk}^2 \left(\frac{\sqrt{\beta_{mk}} \sigma_u}{\beta_{mk}+\sigma_u^2}\right)^2 
  + q_{mk} \beta_{mk} \Bigg( 1 + \frac{\beta_{mk}}{\beta_{mk}+\sigma_u^2}\Bigg) \zeta_{mkk} 
  \\&
  + 2 N q_{mk}  \beta_{mk} \frac{\beta_{mk} }{\beta_{mk} + \sigma_u^2}  \zeta_{mkk}.
\end{align*}
Therefore,
\begin{align}
\E \left[\lvert\gamma_{kk}\rvert ^2\right] &= \sum _{m=1} ^M |x_{mi}|^2 \Bigg(q_{mk} \zeta_{mkk}^2 + N(N+1)\beta_{mk}^2 \left(\frac{\beta_{mk}}{\beta_{mk}+\sigma_u^2}\right)^2  
  +N \beta_{mk}^2 \left(\frac{\sqrt{\beta_{mk}} \sigma_u}{\beta_{mk}+\sigma_u^2}\right)^2 
  \\& \nonumber
  + q_{mk} \beta_{mk} ( 1 + \frac{\beta_{mk}}{\beta_{mk}+\sigma_u^2}) \zeta_{mkk} 
  + 2 N q_{mk}  \beta_{mk} \frac{\beta_{mk} }{\beta_{mk} + \sigma_u^2}  \zeta_{mkk} \Bigg)
  \\& \nonumber
+\sum_{m=1}^M \sum_{\substack{m'=1\\m'\neq m}}^M x_{mi}x_{m'i} \left( q_{mk}\zeta_{mkk} + N\beta_{mk}\frac{ \beta_{mk}}{\beta_{mk}+ \sigma_u^2} \right)
  \\& \nonumber
\times\left( q_{m'k}\zeta_{m'kk} + N\beta_{m'k}\frac{ \beta_{m'k}}{\beta_{m'k}+ \sigma_u^2} \right)^*.
\end{align}
Similarly for $\E \left[\lvert \gamma_{ki}\rvert^2\right]$,
\begin{multline}
\E \left[\lvert\gamma_{ki}\rvert ^2\right] = \sum _{m=1} ^M |x_{mi}|^2 \E \left[|\mathbf{h}_{mk}^T \mathbf{\check h}^{*}_{mi}|^2\right] 
+\sum_{m=1}^M \sum_{\substack{m'=1\\m'\neq m}}^M x_{mi}x_{m'i}  \E \left[ \mathbf{h}_{mk}^T \right]\E \left[ \mathbf{\check h}^{*}_{mi}\right]\E \left[ \mathbf{h}_{m'k}^T \right]^*\E \left[\mathbf{\check h}_{m'i}^*\right]^*,
\end{multline}
where, $\E \left[\mathbf{h}_{mk}\right] =\E \left[\mathbf{\check h}_{mk}\right] = \E \left[\alpha_{mk} \mathbf{\dot h}_{mk}\right] = q_{mk} \mathbf{\dot h}_{mk} $ and,

\begin{align*}
\E \left[|\mathbf{h}_{mk}^T \mathbf{\check h}^{*}_{mi} |^2\right] &= \E \left[ \lvert \alpha_{mk} \alpha_{mi} \mathbf{\dot h}^T_{mk} \mathbf{\dot h}^*_{mi} + \sqrt{\beta_{mk}\beta_{mi}} \mathbf{\bar h}^T_{mk} \mathbf{\hat h}^*_{mi}
+ \alpha_{mk} \sqrt{\beta_{mi}} \mathbf{\dot h}^T_{mk} \mathbf{\hat h}^*_{mi} 
\right. \\& \left.
+\alpha_{mi} \sqrt{\beta_{mk}} \mathbf{\bar h}^T_{mk} \mathbf{\dot h}^*_{mi} \rvert ^2 \right]
\\&
= \E \left[ \alpha_{mk} \alpha_{mi} \lvert  \mathbf{\dot h}_{mk}^T \mathbf{\dot h}_{mi}^* \rvert^2 + \beta_{mk} \beta_{mi}  \lvert  \mathbf{\bar h}_{mk}^T \mathbf{\hat h}_{mi}^*  \rvert^2  + \alpha_{mk} \beta_{mi} \lvert \mathbf{\dot h}_{mk}^T \mathbf{\hat h}_{mi}^* \rvert^2
\right. \\& \left. 
+ \alpha_{mi} \beta_{mk} \lvert \mathbf{\bar h}_{mk}^T \mathbf{\dot h}_{mi}^* \rvert^2 
   + 2 \Re (  \alpha_{mk} \alpha_{mi} \sqrt{\beta_{mk} \beta_{mi}}\mathbf{\dot h}_{mk}^T  \mathbf{\dot h}_{mi}^* \mathbf{\bar h}_{mk}^H  \mathbf{\hat h}_{mi}
   \right. \\& \left. 
  + \alpha_{mk} \alpha_{mi}\sqrt{\beta_{mi}} \mathbf{\dot h}_{mk}^T  \mathbf{\dot h}_{mi}^* \mathbf{\dot h}_{mk}^H  \mathbf{\hat h}_{mi} 
  + \alpha_{mk} \alpha_{mi} \sqrt{\beta_{mk}} \mathbf{\dot h}_{mk}^T  \mathbf{\dot h}_{mi}^* \mathbf{\bar h}_{mk}^H  \mathbf{\dot h}_{mi} 
  \right. \\& \left. 
  +\alpha_{mk}\sqrt{\beta_{mk}}\beta_{mi} \mathbf{\bar h}_{mk}^T  \mathbf{\hat h}_{mi}^* \mathbf{\dot h}_{mk}^H \mathbf{\hat h}_{mi}
  + \alpha_{mk}\beta_{mk} \sqrt{\beta_{mi}} \mathbf{\bar h}_{mk}^T  \mathbf{\hat h}_{mi}^* \mathbf{\bar h}_{mk}^H \mathbf{\dot h}_{mi} 
  \right. \\& \left. 
  + \alpha_{mk} \alpha_{mi} \sqrt{\beta_{mk} \beta_{mi}} \mathbf{\dot h}_{mk}^T  \mathbf{\hat h}_{mi}^*  \mathbf{\bar h}_{mk}^H \mathbf{\dot h}_{mi} )\right]
\\&
= q_{mk}q_{mi} \lvert\zeta_{mki}\rvert^2  +N \beta_{mk} \beta_{mi} \frac{ \beta_{mi}}{\beta_{mi}+\sigma_u^2} 
+ q_{mk} \beta_{mi} \frac { \beta_{mi} }{\beta_{mi}+\sigma_u^2}\zeta_{mkk}
 \\&  + q_{mi} \beta_{mk} \zeta_{mii}.
\end{align*}
This simplifies to the expression in~\eqref{eq:est_acc_ki}.

We calculate $\E \left[\lvert \dot \gamma_{kk}\rvert^2\right]$,
\begin{align}
    \E \left[\lvert \dot \gamma_{kk} \rvert ^2\right] &=\sum_{m=1}^M \lvert x_{mk}\rvert^2\E \left[\lvert \alpha_{mk} \mathbf{\dot h}_{mk}^T \mathbf{\dot h}_{mk}^* \rvert ^2\right] 
    + \sum_{m=1}^M \sum_{\substack{m'=1\\m'\neq m}}^M   x_{mk} x_{m'k} \E \left[\alpha_{mk} \mathbf{\dot h}_{mk}^T  \mathbf{\dot h}_{mk}^*\right] 
    \\& \nonumber
    \times \E \left[ \alpha_{m'k} \mathbf{\dot h}_{m'k}^T  \mathbf{\dot h}_{m'k}^*\right]^*,
\end{align}
where, $ \E \left[ \alpha_{mk} \mathbf{\dot h}_{mk}^T \mathbf{\dot h}_{mk}^*\right] = q_{mk} \zeta_{mkk} $ and,
\begin{equation*}
\begin{split}
 \E \left[ \lvert \alpha_{mk} \mathbf{\dot h}_{mk}^T \mathbf{\dot h}_{mk}^* \rvert ^2 \right] = \E \left[ \lvert \alpha_{mk} \mathbf{\dot h}_{mk}^T \mathbf{\dot h}_{mk}^* *\rvert ^2\right] = q_{mk} \zeta_{mkk}^2.
 \end{split}
\end{equation*}
This expression simplifies to~\eqref{eq:stat_dot},~\eqref{eq:stat_dot1} and \eqref{eq:stat_dot2}.
Finally for $\bar\gamma_{kk}$, 
\begin{equation*}
   \E \left[ \lvert \bar\gamma_{kk}\bar\gamma_{kk}^* \rvert \right] = \E \left[ \rvert  \sum_{m=1} ^M x_{mk} \left( \alpha_{mk} \left( \mathbf{\hat h}_{mk}^T \mathbf{\dot h}_{mk}^* + \mathbf{\dot h}_{mk}^T \mathbf{\bar h}_{mk}^*   \right) + \mathbf{\hat h}_{mk}^T \mathbf{\bar h}_{mk}^* \right) \rvert ^2 \right],
\end{equation*} 
\begin{equation}
\begin{split}
\E \left[\lvert\bar\gamma_{kk}\rvert ^2\right] =& \sum _{m=1} ^M |x_{mk}|^2 \left( \E \left[|\mathbf{h}_{mk}^T \mathbf{\check h}^{*}_{mk}|^2\right] - \E \left[\lvert \alpha_{mk} \mathbf{\dot h}_{mk}^T \mathbf{\dot h}_{mk}^* \rvert ^2\right]   \right)
+\sum_{m=1}^M \sum_{\substack{m'=1\\m'\neq m}}^M x_{mk}x_{m'k} \\& \E \left[ \mathbf{h}_{mk}^T \mathbf{\check h}^{*}_{mk}\right]\E \left[ \mathbf{h}_{m'k}^H \mathbf{\check h}^{*}_{m'k}\right] - \E \left[\alpha_{mk} \mathbf{\dot h}_{mk}^T  \mathbf{\dot h}_{mk}^*\right] \E \left[ \alpha_{m'k} \mathbf{\dot h}_{m'k}^T  \mathbf{\dot h}_{m'k}^*\right]^*,
\end{split}
\end{equation}
where each expectation term has been previously defined.
This expression simplifies to~\eqref{eq:est_bar}.
It is important to note that the two components, $\dot\gamma_{kk}$ and  $\bar\gamma_{kk}$ are uncorrelated. Therefore,
\begin{equation}
    \E \left[ \lvert \gamma_{kk} \rvert ^2\right] = \E \left[ \lvert \dot\gamma_{kk} \rvert ^2\right] + \E \left[ \lvert \bar\gamma_{kk} \rvert ^2\right].
\end{equation}
\subsection{Channel Statistics for Statistical Beamforming Schemes}
\label{app:stat_stat}
\begin{equation}
\begin{split}
    \E \left[\gamma_{kk}\right] &= \sum_{m=1} ^M x_{mk}( \E \left[ \alpha_{mk} \right] \mathbf{\dot h}_{mk}^T \mathbf{\dot h}_{mk}^* + \E \left[ \alpha_{mk}\right] \E\left[  \mathbf{\bar h}_{mk}^T \right] \mathbf{\dot h}_{mk}^*)  =\sum_{m=1} ^M x_{mk} q_{mk} \zeta_{mkk}
\end{split}
\end{equation}
\begin{equation}
\begin{split}
    \E \left[\gamma_{ki}\right] &= \sum_{m=1} ^M x_{mi}( \E \left[ \alpha_{mk} \alpha_{mi} \right] \mathbf{\dot h}_{mk}^T \mathbf{\dot h}_{mi}^* + \E \left[ \alpha_{mi}\right] \E\left[  \mathbf{\bar h}_{mk}^T \right] \mathbf{\dot h}_{mi}^*) =\sum_{m=1} ^M x_{mi} q_{mk} q_{mi} \zeta_{mki}
\end{split}
\end{equation}
\begin{equation}
\begin{split}
    \E \left[\dot\gamma_{kk}\right] = \sum_{m=1} ^M x_{mk} \E \left[ \alpha_{mk} \right] \mathbf{\dot h}_{mk}^T \mathbf{\dot h}_{mk}^* =\sum_{m=1} ^M x_{mk} q_{mk} \zeta_{mkk}
\end{split}
\end{equation}
\begin{equation}
     \E \left[\bar\gamma_{kk}\right] =  \E \left[\gamma_{kk}\right] -  \E \left[\dot\gamma_{kk}\right] = 0
\end{equation}
Similarly,
\begin{align}
    \E \left[\lvert \gamma_{kk} \rvert ^2\right] &=\sum_{m=1}^M \lvert x_{mk}\rvert^2\E \left[\lvert \alpha_{mk} \mathbf{h}_{mk}^T \mathbf{\dot h}_{mk}^* \rvert ^2\right] 
    + \sum_{m=1}^M \sum_{\substack{m'=1\\m'\neq m}}^M  x_{mk} x_{m'k}  \E \left[\alpha_{mk} \mathbf{h}_{mk}^T  \mathbf{\dot h}_{mk}^*\right] \E \left[ \alpha_{m'k} \mathbf{h}_{m'k}^T  \mathbf{\dot h}_{m'k}^*\right]^*,
\end{align}
where, $ \E \left[ \alpha _{mk}\mathbf{h}_{mk}^T \mathbf{\dot h}_{mk}^*  \right]= q_{mk} \zeta_{mkk} $ and,
\begin{align*}
 \E \left[ \lvert \alpha_{mk} \mathbf{h}_{mk}^T \mathbf{\dot h}_{mk}^* \rvert ^2 \right] &= \E \left[ \lvert \alpha_{mk} \mathbf{\dot h}_{mk}^T \mathbf{\dot h}_{mk}^* + \sqrt{\beta_{mk}} \alpha_{mk} \mathbf{\bar h}_{mk}^T \mathbf{\dot h}_{mk} ^*\rvert ^2\right]
 \\
 &=\E \left[ \alpha_{mk} \lvert \mathbf{\dot h}_{mk}^T \mathbf{\dot h}_{mk}^* \rvert ^2 + \alpha_{mk}\beta_{mk} \lvert \mathbf{\bar h}_{mk}^T \mathbf{\dot h}_{mk} ^*\rvert ^2 + \alpha_{mk} \sqrt{\beta_{mk}} 2\Re( \mathbf{\dot h}_{mk}^T \mathbf{\dot h}_{mk}^* \mathbf{\dot h}_{mk}^T \mathbf{\bar h}_{mk}^*)  \right]
 \\ \nonumber
 &= q_{mk} \zeta_{mkk} ^2 + \beta_{mk} q_{mk} \zeta_{mkk}.
\end{align*}
Therefore,
\begin{align}
    \E \left[\lvert \gamma_{kk} \rvert ^2\right] &=\sum_{m=1}^M \lvert x_{mk}\rvert^2\Big( q_{mk} \zeta_{mkk} ^2 + \beta_{mk} q_{mk} \zeta_{mkk} \Big)
    + \sum_{m=1}^M \sum_{\substack{m'=1\\m'\neq m}}^M  x_{mk} x_{m'k} q_{mk} \zeta_{mkk} q_{m'k} \zeta_{m'kk}.
\end{align}
Similarly, for $\E \left[\lvert \gamma_{ki}\rvert^2\right]$,
\begin{align}
    \E \left[\lvert \gamma_{ki} \rvert ^2\right] &=\sum_{m=1}^M \lvert x_{mi}\rvert^2 \E \left[\lvert\alpha_{mi} \mathbf{h}_{mk}^T \mathbf{\dot h}_{mi}^* \rvert ^2\right] 
    \\  & + \sum_{m=1}^M \sum_{\substack{m'=1\\m'\neq m}}^M  x_{mi} x_{m'i} \E \nonumber \left[\mathbf{h}_{mk}^T\right] \E \left[\alpha_{mi}  \mathbf{\dot h}_{mk}^*\right] \E \left[\mathbf{h}_{m'k}^T\right]^* \E \left[\alpha_{m'k} \mathbf{\dot h}_{m'k}^*\right]^*,
\end{align}
where, $\E \left[\mathbf{h}_{mk}\right] = \E \left[\alpha_{mk} \mathbf{\dot h}_{mk}\right] = q_{mk} \mathbf{\dot h}_{mk} $ and,
\begin{multline*}
 \E \left[ \lvert \alpha_{mi} \mathbf{h}_{mk}^T \mathbf{\dot h}_{mi}^* \rvert ^2 \right] = \E \left[ \lvert \alpha_{mk}\alpha_{mi} \mathbf{\dot h}_{mk}^T \mathbf{\dot h}_{mi}^* + \sqrt{\beta_{mk}} \alpha_{mi} \mathbf{\bar h}_{mk}^T \mathbf{\dot h}_{mi} ^*\rvert ^2\right]
  \\=\E \left[ \alpha_{mk} \alpha_{mi} \lvert \mathbf{\dot h}_{mk}^T \mathbf{\dot h}_{mi}^* \rvert ^2 + \alpha_{mi}\beta_{mk} \lvert \mathbf{\bar h}_{mk}^T \mathbf{\dot h}_{mi} ^*\rvert ^2 
  + \alpha_{mk}\alpha_{mi} \sqrt{\beta_{mk}} 2\Re( \mathbf{\dot h}_{mk}^T \mathbf{\dot h}_{mi}^* \mathbf{\dot h}_{mi}^T \mathbf{\bar h}_{mk}^*)  \right]
 \\ = q_{mk} q_{mi} \lvert \zeta_{mki} \rvert ^2 + \beta_{mk} q_{mi} \zeta_{mii}.
\end{multline*}
This expression simplifies to~\eqref{eq:stat_ki1} and
\eqref{eq:stat_ki2}. Since the definition of $\dot \gamma_{kk} $ remains unchanged, its statistics calculated in Appendix B still apply. Finally, 
\begin{equation*}
\begin{split}
   \E \left[ \lvert \bar\gamma_{kk}   \rvert^2 \right] = \E \left[\lvert \gamma_{kk} - \dot \gamma_{kk}\rvert ^2 \right] =    \E \left[\left\lvert\sum_{m=1} ^M x_{mk} \alpha_{mk} \mathbf{\bar h}_{mk}^T \mathbf{\dot h}_{mk}^*\right\rvert^2\right]
\end{split}
\end{equation*}
\begin{align}
    \E \left[ \lvert \bar\gamma_{kk}   \rvert^2 \right] &=\sum_{m=1}^M \lvert x_{mk}\rvert^2\E \left[\lvert \alpha_{mk} \mathbf{\bar h}_{mk}^T \mathbf{\dot h}_{mk}^* \rvert ^2\right] 
    \\& \nonumber+ \sum_{m=1}^M \sum_{\substack{m'=1\\m'\neq m}}^M x_{mk} x_{m'k}  \E \left[\alpha_{mk} \mathbf{\bar h}_{mk}^T  \mathbf{\dot h}_{mk}^*\right] \E \left[ \alpha_{m'k} \mathbf{\bar h}_{m'k}^T  \mathbf{\dot h}_{m'k}^*\right]^*
\end{align}
where, $ \E \left[\alpha_{mk} \mathbf{\bar h}_{mk}^T  \mathbf{\dot h}_{mk}^*\right]=0$ and,
$ \E \left[\lvert \alpha_{mk} \mathbf{\bar h}_{mk}^T \mathbf{\dot h}_{mk}^* \rvert ^2\right] = \beta_{mk} q_{mk} \zeta_{mkk}.   $ Therefore,
\begin{equation}
    \E \left[ \lvert \bar\gamma_{kk}   \rvert^2 \right] =\sum_{m=1}^M \lvert x_{mk}\rvert^2\beta_{mk} q_{mk} \zeta_{mkk}
\end{equation}
It is important to note that the two components $\dot\gamma_{kk}$ and
$\bar\gamma_{kk}$ are uncorrelated. Therefore,
\begin{equation}
    \E \left[ \lvert \gamma_{kk} \rvert ^2\right] = \E \left[ \lvert \dot\gamma_{kk} \rvert ^2\right] + \E \left[ \lvert \bar\gamma_{kk} \rvert ^2\right]
\end{equation}
\end{appendix}



%
\bibliographystyle{IEEEtran}
\bibliography{bibJournalList,main}

\end{document}